%Basics:
\documentclass[12pt,a4paper]{article}
\usepackage[english]{babel}
\usepackage{csquotes}
\usepackage[backend=biber,style=numeric,language=english,sorting=nyt,
minnames=7,maxnames=7]{biblatex}
\usepackage{geometry}

%Floats Images/Tables:
\usepackage{graphicx}
\usepackage{caption}[2022/03/01]
\usepackage{subcaption}
\usepackage{booktabs}
\usepackage{float}

%Additional Packages:
\usepackage{booktabs}
\usepackage{enumerate}
\usepackage{enumitem}

%Math & Physics
\usepackage{amsmath}
\usepackage{amsthm}
\usepackage{amssymb}
\usepackage{bm}
\usepackage{bbm}
\usepackage{siunitx}

%Custom Commands
\usepackage{CustomCommands}

%Hyperref:
\usepackage{hyperref}

\addbibresource{ref.bib}

\geometry{a4paper, top=16mm, left=19mm, right=19mm, bottom=30mm,
headsep=0mm, footskip=15mm}
{
\title{\bfseries\huge Reconstructing wind fields from gravitational data on gas giants:
An investigation of mathematical methods.}
\author{Tim-Jonas Peter\textsuperscript{1}, Volker Michel\textsuperscript{1}, Franz-Theo Suttmeier\textsuperscript{2}, Johannes Wicht\textsuperscript{3}}
}
\date{}

%Make Theorem and Definition names bold.
\makeatletter
\def\th@plain{%
  \thm@notefont{}% same as heading font
  \itshape % body font
}
\def\th@definition{%
  \thm@notefont{}% same as heading font
  \normalfont % body font
}
\makeatother

\theoremstyle{plain}

\newtheorem{theoremS}{Theorem}[section]

\newtheorem{lemmaS}[theoremS]{Lemma}

\theoremstyle{definition}

\newtheorem{definitionS}[theoremS]{Definition}

\newtheorem{assumptionS}[theoremS]{Assumption}

\theoremstyle{remark}

\newtheorem{remarkS}[theoremS]{Remark}

\numberwithin{equation}{section}

\begin{document}
%Title Page
%--------------------------------------------------------------------------------------
\maketitle
\footnotetext[1]{Geomathematics Group, Department of Mathematics, University of Siegen, Emmy-Noether-Campus, Walter-Flex-Str. 3, 57068 Siegen,
Germany, E-Mail addresses: \href{mailto:tim-jonas.peter@mathematik.uni-siegen.de}{tim-jonas.peter@mathematik.uni-siegen.de}, \href{mailto:michel@mathematik.uni-siegen.de}{michel@mathematik.uni-siegen.de}}
\footnotetext[2]{Scientific Computation Group, Department of Mathematics, University of Siegen, Emmy-Noether-Campus, Walter-Flex-Str. 3, 57068 Siegen,
Germany, E-Mail address: \href{mailto:suttmeier@mathematik.uni-siegen.de}{suttmeier@mathematik.uni-siegen.de}}
\footnotetext[3]{Planetary Science Department, Max Planck Institute for Solar System Research,  Justus-von-Liebig-Weg 3, 37077 Göttingen, Germany, E-Mail address: \href{mailto:wicht@mps.mpg.de}{wicht@mps.mpg.de}}

%Abstract
%--------------------------------------------------------------------------------------
\begin{abstract}
    The atmospheric structure of gas giants, especially those of Jupiter and Saturn, has been an object of scientific studies for a long time. 
    The measurement of the gravitational fields by the Juno mission for Jupiter and the Cassini mission for Saturn offered new possibilities to study the interior structure of these planets.
    Accordingly, the reconstruction of the wind velocities from gravitational data on gas giants has been the subject of many research papers over the years, yet the mathematical foundations
    of this inverse problem and its numerical resolution have not been studied in detail. 
    This article suggests a rigorous mathematical theory for inferring the wind fields of 
    gas giants. 
    In particular, an orthonormal basis is derived which can be associated to 
    models of the gravitational potential and the interior wind velocity field. 
    Moreover, this approach provides the foundations for existing resolution concepts of the inverse problem.
\end{abstract}

\vspace*{2cm}
\textbf{Keywords:} Gas giants, Inverse Gravimetry, Differential Rotation, Orthonormal 
Basis
\newpage

%Introduction
%--------------------------------------------------------------------------------------
\section{Introduction}
A long standing subject of research in the planetary sciences has been the study of the interior dynamics of gas-giant-atmospheres, like those of Jupiter and Saturn. 
The surfaces of these planets can be observed by telescope --- which has been going on since Galilei observed Jupiter and Saturn in the early 17th century --- and today by spacecraft. 
They display a clearly visible banded structure which is related to fast winds in eastward and westward directions (see Figure \ref*{fig:SurfacWinds} in Section \ref*{sec:InvProb} for some pictures). 
These winds, which circumvent the planets, are called zonal winds. 
Being observed in the cloud layer at a pressure of a few bar, it was an open question how deep these winds may extend into the interior and how strong they are at a specified depth \cite{CylindWindsKaspi,AtmosWindsKaspi,ZonalWindsWicht,DiffRotCompWisdom}.
To attempt to answer these questions, further data related to the interior of Jupiter and Saturn is required.
This data was provided by the Juno mission around Jupiter, which arrived in orbit around the planet in 2016, and the Cassini Grand Finale, which took place in 2017, for Saturn. 
These spacecrafts measured the gravitational field of their respective target planets (see \cite{JupiterGravFieldIess,SaturnGravFieldIess} for the data on Jupiter and Saturn respectively), with the hope being that this data would provide information about mass transports caused by wind fields in the planetary interior.
This would enable conclusions about the structure of the atmosphere and wind fields.
\par
More typically, gravity data are inverted for the density structure inside a planet. 
At Earth, very precise high resolution data are routinely used to identify geological features or to search for ore deposits. 
For further details on the mathematical formulation, see the survey article \cite*{NonUniqueGravMichel} and the references therein. 
For gas planets, the very fast zonal winds yield tiny density variations whose gravity signal can be measured by modern space missions, this interrelation has been studied in several papers \cite{ZonalWindsWicht2,GravSignZonalFlowsHubbard,CylindWindsKaspi,AtmosWindsKaspi,ZonalWindsWicht,DiffRotCompWisdom}
The gravity data suggest that the zonal winds reach about 3000 km deep in Jupiter and about 9000 km deep in Saturn. 
However, the inversion of gravity for flow is complex and relies on a number of assumptions.
The aim of this article is to examine the methods that allow inverting zonal properties from gravity data from a rigorous mathematical viewpoint. 
We will examine the models for the gravitational potential of gas giants in Section \ref{sec:GravPot} and for the wind field in Section \ref{sec:WindInsidePlanet}. 
After that an orthonormal basis adapted for this problem is developed in Section \ref{sec:ONBGravPot} and finally we show how this basis can be used to connect the gravitational data to the wind fields in Section \ref{sec:JnConnect}.
This paper is in large part based on the thesis \cite{MasterthesisTim}.

%On Notation
%--------------------------------------------------------------------------------------
\section{On Notation}
In $\R^3$, we label the standard basis vectors by $\bm{e}^1,\bm{e}^2$ and $\bm{e}^3$. Some
sets that we will use are 
\begin{align*}
    B &:= B_R(0) = \{\bx\in\R^3\,|\, |\bx|<R\} \\
    \S^2 &:= \{\bx\in\R^3\,|\, |\bx|=1\}\text{ and }\\
    \partial B &= \{\bx\in\R^3\,|\, |\bx|=R\}.
\end{align*}
$\overline{B}$ will refer to the closure of $B$, meaning that $\overline{B} = \{\bx\in\R^3\,|\,|\bx|\leq R\}$.
We will also use polar coordinates, which we can use to express any vector 
$\bx\neq 0$ in the following form
\begin{equation*}
    \bx = \begin{pmatrix}
        r\sqrt{1-t^2}\cos(\varphi) \\
        r\sqrt{1-t^2}\sin(\varphi) \\
        rt
    \end{pmatrix},
\end{equation*}
where $r>0$, $\varphi\in[0,2\pi)$ and $t\in[-1,1]$. Here, the polar distance $t$ can be 
expressed in terms of a colatitude $\vartheta\in[0,\pi]$, by $t=\cos(\vartheta)$ or in terms of a latitude $\theta\in[-\pi/2,\pi/2]$, by $t=\sin(\theta)$. This allows 
us to construct a local orthonormal basis of $\R^3$ via
\begin{equation*}
    \bm{e}^r :=\begin{pmatrix}
        \sqrt{1-t^2}\cos(\varphi) \\
        \sqrt{1-t^2}\sin(\varphi) \\
        t
    \end{pmatrix},\,
    \bm{e}^{\varphi} :=\begin{pmatrix}
        -\sin(\varphi) \\
        \cos(\varphi) \\
        0
    \end{pmatrix}\text{ and }
    \bm{e}^{t} :=\begin{pmatrix}
        -t\cos(\varphi) \\
        -t\sin(\varphi) \\
        \sqrt{1-t^2}
    \end{pmatrix}.
\end{equation*}
We can also decompose any vector $\bx\neq 0$ as $\bx = r\bi$, where $r:=|\bx|$
and $\bi:=\bx/|\bx|\in\S^2$. In polar coordinates we would write 
$\bx=r\bm{e}^r$. \\
This decomposition of any non-zero vector into a radial and a tangential part can also 
be applied to differential operators, yielding the following definitions. For a 
continuously differentiable function $F$, the \textbf{surface gradient operator} $\nabla^{\ast}$
is defined via
\begin{equation*}
    \nabla F(\bx) = \bi\partial_rF(r\bi)+\frac{1}{r}\nabla^{\ast}F(r\bi).
\end{equation*}
In polar coordinates this means that
\begin{equation*}
    \nabla^{\ast} = \sqrt{1-t^2}\bm{e}^t\partial_t+\frac{1}{\sqrt{1-t^2}}\bm{e}^{\varphi}
    \partial_{\varphi}.
\end{equation*}
We will also need the \textbf{surface curl operator}, defined by 
\begin{equation*}
    \Lup^{\ast}F(r\bi) := \bi\times\nabla^{\ast}F(r\bi),
\end{equation*}
or in polar coordinates
\begin{equation*}
    \Lup^{\ast} = -\sqrt{1-t^2}\bm{e}^{\varphi}\partial_t+\frac{1}{\sqrt{1-t^2}}
    \bm{e}^t\partial_{\varphi}.
\end{equation*}
If the function $F$ is twice continuously differentiable, then we can define the 
\textbf{Laplace-Beltrami operator} $\Delta^{\ast}$ by decomposing the \textbf{Laplacian}
$\Delta$ into a radial and a tangential part.
\begin{equation*}
    \Delta F(r\bi) = \partial_r^2F(r\bi)+\frac{2}{r}\partial_rF(r\bi)+
    \frac{1}{r^2}\Delta^{\ast}F(r\bi),
\end{equation*}
then $\Delta^{\ast}$ only acts on the coordinate $\bi\in\S^2$.\\
We will also require some function spaces, namely the space of all real functions which 
are continuously differentiable $k$ times on an open subset $D\subs\R^3$, denoted by 
$\Cont^k(D)$. The spaces of square integrable real functions on the open ball
of radius $R>0$ and on the unit sphere, denoted by $\LSq(B)$ and $\LSq(\S^2)$, 
respectively will be needed as well. Finally $\LSq_w(0,R)$ will refer to the real, 
weighted $\LSq$-space with weight function $w(r)>0$ on $(0,R)$ and inner product
\begin{equation*}
    \langle F,G\rangle_{\LSq_w(0,R)} := \int_0^RF(r)G(r)w(r)\di r. 
\end{equation*}
For this weighted $\LSq$-space the norm will be given by 
\begin{equation*}
    \|F\|_{\LSq_w(0,R)} := \int_{0}^{R}F^2(r)w(r)\di r.
\end{equation*}
In the case of $\LSq(B)$ or $\LSq(\S^2)$ we have to change the integration domain and the integration measure accordingly to get the appropriate scalar product and norm, where the weight function simply becomes $w\equiv 1$ then. 
It is worth remembering that the $\LSq-$spaces technically 
contain equivalence classes of square-integrable functions, where two functions $F_1,F_2$
are equivalent if $\|F_1-F_2\|_{\LSq}=0$.
In the following chapter we will often use the \textbf{fully normalized spherical 
harmonics} $Y_{n,j}$ (where $n\in\N_0$ and $j\in\{-n,\dots,n\}$). These functions form
a complete orthonormal system of the Hilbert space $\LSq(\S^2)$ and are given by
\begin{equation}\label{eq:DefSPH}
    Y_{n,j}(\bi(\varphi,t)) := \sqrt{\frac{2n+1}{4\pi}\frac{(n-|j|)!}{(n+|j|)!}(2-\delta_{j0})}
    P_{n,|j|}(t)\begin{cases}
        \sin(j\varphi) & ,\text{ if }j>0 \\
        \cos(j\varphi) & ,\text{ if }j\leq 0
    \end{cases}
\end{equation}
in spherical coordinates. Here $n\in\N_0$ and $j\in\{-n,\dots,n\}$ and $P_{n,|j|}$ refers to the
well-known associated Legendre functions. The prefix fully normalized just means that we demand \begin{equation*}
    1 = \|Y_{n,j}\|_{\LSq(\S^2)}^2 = \int_{\S^2}Y_{n,j}^2(\bi)\di\omega(\bi)    
\end{equation*}
for every valid index $n$ and $j$.
If $j=0$, then we can simplify in Equation 
\eqref{eq:DefSPH}:
\begin{equation}\label{eq:DefSPHj0}
    Y_{n,0}(\bi(\varphi,t)) = \sqrt{\frac{2n+1}{4\pi}}P_n(t) = 
    \sqrt{\frac{2n+1}{4\pi}}P_n(\xi_3),
\end{equation} 
where the $P_n$ are the Legendre polynomials.

%The gravitational potential
%--------------------------------------------------------------------------------------
\section{The Gravitational Potential}\label{sec:GravPot}

We will assume that the deformation of the gas giant due to its own rotation is small 
enough such that we can model the planet as $B_R(0)=:B$, where $R>0$ is a characteristic radius (for example the equatorial radius). 
The point $\bm{0}\in B$ should lie in the \textbf{center of mass} $\bx_{\mathrm{C}}$, so 
\begin{align*}
    \bm{0} = \bx_{\mathrm{C}} =: \frac{1}{M}\int_{B}\bx \rho(\bx)\di\bx,
\end{align*}
where $M = \int_B\rho(\bx)\di\bx$ refers to the total planet mass and $\rho\in\Cont^2(\overline{B})$ is the mass density of the planet.
We can then also define the \textbf{gravitational potential} caused by $\rho$ as
\begin{align}
    V(\bx):=-G\int_{B}\frac{\rho(\bm{y})}{|\bx-\bm{y}|}\di \bm{y}\label{eq:GravPot}
\end{align}
for any $\bx\in\R^3.$ $G$ is the gravitational constant
($G\approx \qty{6.67e-11}{\metre^3\kilogram^{-1}\second^{-2}}$, see \cite{GravConst}). 
The gravitational potential behaves as the potential of a point mass at infinity, 
a property which is called \textbf{regularity at infinity}. 
Specifically this means that $|V(\bx)|=\mathcal{O}\left(|\bx|^{-1}\right)$ and 
$|\nabla V(\bx)| = \mathcal{O}\left(|\bx|^{-2}\right)$ holds for $|\bx|\to\infty$.
Here $\mathcal{O}$ is the \textbf{Landau symbol}, which is defined via 
\begin{align*}
    f(\bx) = \mathcal{O}(g(\bx))\text{ for }x\to\infty \,:\Leftrightarrow 
    \exists K,C>0:\,\forall |\bx|\geq C:\,|f(\bx)|\leq K|g(\bx)|.
\end{align*}
We now summarize some basic facts about the gravitational potential as defined
above (for proofs of these statements see e.g. the summary at the end of Chapter 
3.1 in \cite{Geomath}):
\begin{enumerate}[label={\Roman*.}]
    \item $V$ exists on all of $\R^3$, is bounded, continuous and partially
    differentiable everywhere. We can also exchange integration and
     differentiation.
    \item $V\in \Cont^2(\R^3\setm\overline{B})$ and 
    satisfies $\Delta V=0$ on $\R^3\setm\overline{B}$. $V$ is also regular at infinity.
    \item $V\in \Cont^2(B)$ and satisfies $\Delta V=4\pi G\rho$ on $B$.
\end{enumerate}
It is also worth noting that $V$ is continuously differentiable everywhere. Although
this is not proven in \cite*{Geomath}, a proof of this is easily adapted from Theorem 
3.1.2 in \cite*{Geomath}, using Corollary 3.1.4 in the same book.

%Assumptions and simplifications
%--------------------------------------------------------------------------------------
\begin{assumptionS}\label{asmp:PhiIndependence}
    Planets are deformed by their rotation. 
    The choice of a spherical reference system where the (mean) rotation axis coincides with the z-axis ($\vartheta=0$) guarantees that the resulting density distribution and thus the gravity field is axisymmetric, i.e. independent of longitude $\varphi$, and symmetric with respect to the equatorial plane \cite*{JupiterGravFieldIess,SaturnGravFieldIess}. 
    Thus 
    \begin{equation*}
        \int_B|\bx|^nY_{n,j}\left(\frac{\bx}{|\bx|}\right)
        \rho(\bx)\di \bx=0
    \end{equation*}
    holds for $j\neq 0$.
\end{assumptionS}
We now have every tool to expand the gravitational potential on $\R^3\setm B$ 
into spherical harmonics. Explicitly, we get
\begin{equation}
    V(r\bi) = -\frac{GM}{r}\left(1-
    \sum_{n=2}^{\infty}J_n\left(\frac{\rad}{r}\right)^nP_n(\bi\cdot\bm{e}^3)
    \right)\label{eq:PotExpan}
\end{equation}
if $r>R$. The coefficients $J_n$ are defined as
\begin{equation*}
    J_n:=-\frac{1}{MR^n}\int_{B}|\bx|^nP_n\left(\frac{x_3}{|\bx|}\right)
    \rho(\bx)\di\bx\text{ for }n\geq 2.
\end{equation*}
The derivation of this equation uses the fact that 
\begin{equation*}
    \frac{1}{|\bx-\by|} = \sum_{n=0}^{\infty}\frac{|\bx|^n}{|\by|^{n+1}}
    P_n\left(\frac{\bx}{|\bx|}\cdot\frac{\by}{|\by|}\right)
\end{equation*}
for any two vectors $\bx,\by\in\R^3$ with $|\bx|<|\by|$ (see e.g. Corollary 3.4.18
in \cite{Geomath} for a proof of this). We also need that
$\frac{2n+1}{4\pi}P_n(\bi\cdot\bm{\eta}) = \sum_{j=-n}^nY_{n,j}(\bi)Y_{n,j}(\bm{\eta})$
holds for any two $\bi,\bm{\eta}\in\S^2$ (see e.g. Theorem 5.11 on page 103 in \cite{ConstrAppr}). 
With these two facts we see that 
\begin{align*}
    V(r\bi) &= -G\sum_{n=0}^{\infty}\sum_{j=-n}^n \frac{4\pi}{2n+1}r^{-n-1}Y_{n,j}(\bi)
    \int_B\rho(\by)|\by|^nY_{n,j}\left(\frac{\by}{|\by|}\right)\di\by \\
    &= -\frac{G}{r}\sum_{n=0}^{\infty}r^{-n}\sqrt{\frac{4\pi}{2n+1}}Y_{n,0}(\bi)
    \int_{0}^{R}\int_{\S^2}s^{n+2}\rho(s\bm{\eta})\sqrt{\frac{4\pi}{2n+1}}Y_{n,0}(\bm{\eta})
    \di\omega(\bm{\eta})\di s.
\end{align*}
Because $Y_{n,0}(\bi) = \sqrt{\frac{2n+1}{4\pi}}P_n(\bi\cdot\bm{e}^3)$, we can conclude
\begin{align*}
    & \int_{0}^{R}\int_{\S^2}s^{2}\rho(s\bm{\eta})\sqrt{\frac{4\pi}{2n+1}}Y_{0,0}(\bm{\eta})
    \di\omega(\bm{\eta})\di s
    =\int_B\rho(\by)\di\by = M, \\
    & \int_{0}^{R}\int_{\S^2}s^{3}\rho(s\bm{\eta})\sqrt{\frac{4\pi}{2n+1}}Y_{1,0}(\bm{\eta})
    \di\omega(\bm{\eta})\di s = \int_B y_3\rho(\by)\di\by = (\bx_C)_3 = 0\text{ and }\\
    & \int_{0}^{R}\int_{\S^2}s^{n+2}\rho(s\bm{\eta})\sqrt{\frac{4\pi}{2n+1}}Y_{n,0}(\bm{\eta})
    \di\omega(\bm{\eta})\di s = -MR^nJ_n\text{ for all }n\geq 2.
\end{align*}
Therefore Equation \eqref{eq:PotExpan} holds.
This expansion is common in the literature 
\cite{GravSignZonalFlowsHubbard,CylindWindsKaspi,ZonalWindsWicht,TGWEZhang}
and the coefficients $J_n$, calculated from the observed data, are available for 
computations.
See for example \cite{AtmosWindsKaspi}, for a table containing the coefficients $J_2$ to $J_{10}$ for Jupiter as well as Saturn.
More recently the coefficients have also been published up to $J_{40}$ (see \cite{CylindWindsKaspi}), but the method of deriving the higher order coefficients is based on some additional assumptions on the zonal wind fields.

%Modeling the Wind inside the Planet
%--------------------------------------------------------------------------------------
\section{Modeling the Wind inside the Planet}\label{sec:WindInsidePlanet}
The scheme we will use to solve the forward problem of calculating the gravitational coefficients from the zonal winds is summarized in Figure \ref{fig:FlowchartFWP}.
\begin{figure}[ht]
    \centering
    \includegraphics[width=0.7\textwidth]{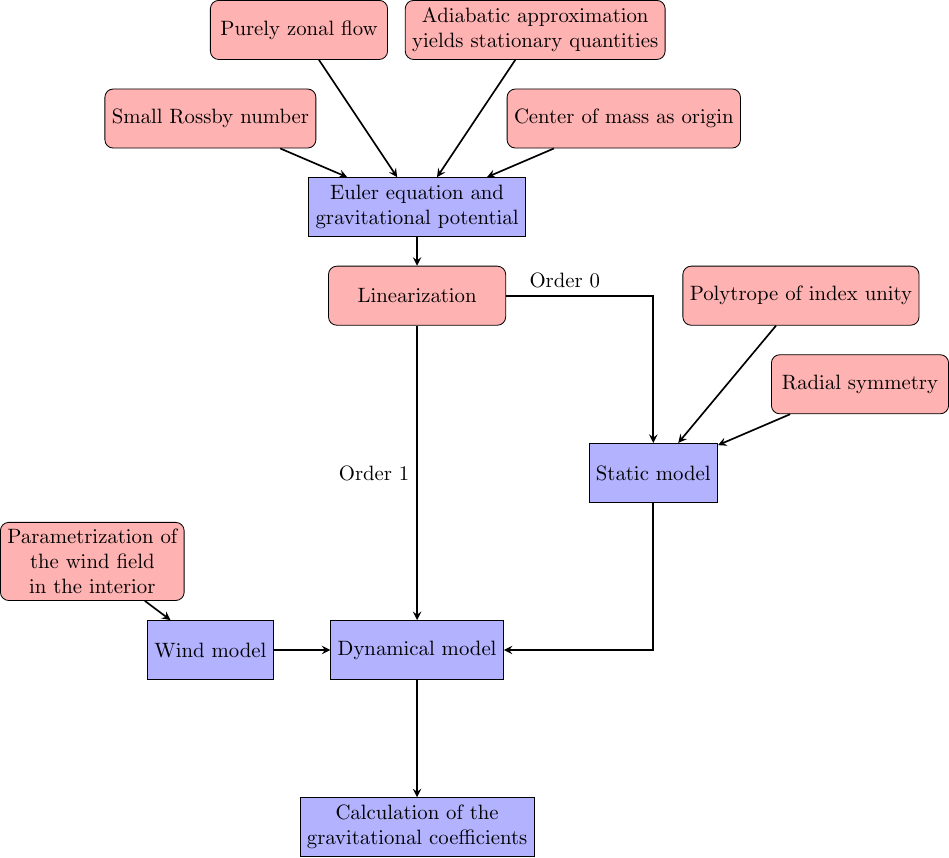}
    \caption{Flowchart of the forward problem}\label{fig:FlowchartFWP}
\end{figure}
Early models of the Jovian atmosphere used the \textbf{Euler equation} in a non-rotating
coordinate system as a starting point \cite{GravSignZonalFlowsHubbard}, but today 
the Euler equation in a rotating reference frame is often used 
\cite{DeepWindStructKaspi,AtmosWindsKaspi,ZonalWindsWicht,TGWEZhang}. 
Of course we will also need \textbf{mass conservation}.
\begin{align}\label{eq:Euler}
    \partial_t\bm{u}+\left(\bm{u}\cdot\nabla\right)\bm{u}+2\bm{\Omega}
    \times\bm{u}+\bm{\Omega}\times\left(\bm{\Omega}\times\bx\right) & =
    -\frac{1}{\rho}\nabla p+\bm{g}\text{ for }\bx\in B, \\ \label{eq:MassCon}
    \partial_t\rho+\nabla\cdot(\rho\bm{u}) & =0\text{ for }\bx\in B.
\end{align}
Here $\bm{u}$ is the wind velocity, $\bm{\Omega} = \Omega\bm{e}^3$ is the constant angular
rotation rate of the planet, $\rho$ is the mass density from Chapter \ref{sec:GravPot} 
satisfying Assumption \ref{asmp:PhiIndependence}, $p$ is the pressure and 
$\bm{g}=-\nabla V$ is the gravitational acceleration caused by the potential $V$ from 
Chapter \ref{sec:GravPot}. These quantities will be assumed to be twice continuously 
differentiable on $B$. $\bx\in\R^3$ represents location and for now
$t\geq 0$ represents the time, although we will later neglect the time argument. 
$B$ is defined as in the previous section.
It is worth noting that we know the wind field on the surface of the planet (see for 
example \cite{SurfaceWindSaturn,SurfaceWindJupiter}). The Euler equation is then simplified
further using the following asssumptions \cite{SelfConsistentTreatmentKaspi}.
\begin{assumptionS}\label{asmp:TimeIndependence}
    Experiments, numerical simulations, and observations show that the flows in gas planets are dominated by zonal flows.
    Since these flows are rather constant in time, we can neglect the time derivative in the Euler equation \eqref{eq:Euler}.
    Local variations in density can also be neglected in the so-called anelastic approximation, which eliminated sound waves. We can thus ignore the time derivative in Equation \eqref{eq:MassCon} and drop the time-dependence from the problem.
    From now on we will omit the time argument in all physical quantities and $t$ will always refer to the polar distance in spherical coordinates.
\end{assumptionS}
\begin{assumptionS}\label{asmp:SmallRossby}
    The zonal flows are much slower than the rotation speed $R\Omega$ of the planet. The Rossby number, providing the ratio of flow speed to rotation speed, is smaller than 0.01 for Jupiter and smaller than 0.05 for Saturn. 
    We can thus neglect the non-linear inertial term $(\bm{u}\cdot\nabla)\bm{u}$ in equation \eqref{eq:Euler}.
\end{assumptionS}
\begin{remarkS}
    Note that we also neglect magnetic effects, since the conductivity in the outer planetary layers we are interested in is likely too low to support significant Lorentz forces.
\end{remarkS}
We can further simplify Equation \eqref{eq:Euler} by writing the centrifugal acceleration as a gradient field. The \textbf{centrifugal potential} $V_{\Omega}$ 
satisfies $\nabla V_{\Omega}(\bx)=\bm{\Omega}\times(\bm{\Omega}\times\bx)$, therefore 
Equation \eqref{eq:Euler} becomes
\begin{equation}\label{eq:EulerSimp}
    2\bm{\Omega}\times(\rho\bm{u})= -\nabla p-\rho\nabla(V+V_{\Omega}).
\end{equation}
Taking the curl and remembering that the curl of a gradient field vanishes rule yields the following on the right-hand side of Equation 
\eqref{eq:EulerSimp}
\begin{equation*}
    \nabla\times(-\nabla p-\rho\nabla(V+V_{\Omega})) = 
    -\nabla \rho\times\nabla(V+V_{\Omega}).
\end{equation*}
On the left-hand side of Equation \eqref{eq:EulerSimp} we get
\begin{equation*}
    \nabla\times(\bm{\Omega}\times(\rho\bm{u})) = \bm{\Omega}\nabla\cdot(\rho\bm{u})
    -(\bm{\Omega}\cdot\nabla)(\rho\bm{u}) = -(\bm{\Omega}\cdot\nabla)(\rho\bm{u}),
\end{equation*}
where the last equality is due to Equation \eqref{eq:MassCon} in the 
anelastic approximation. Thus the following partial differential equation holds on 
$B$:
\begin{equation}
    2(\bm{\Omega}\cdot\nabla)(\rho\bm{u}) =\nabla \rho\times\nabla
    (V+V_{\Omega})\label{eq:UnperThermalWind}.
\end{equation}
This is still too complicated, because 
the right-hand side is nonlinear in $\rho$ ($V$ also depends on $\rho$). 
To have a chance to solve this equation, we need to linearize it around some background state.

%Perturbation of the TWE
%--------------------------------------------------------------------------------------
\subsection{A Perturbation Approach}
In this chapter we will use the standard method to linearize Equation 
\eqref{eq:UnperThermalWind}, laid out in 
\cite{SelfConsistentTreatmentKaspi,DeepWindStructKaspi,OriginZonalWindKong,ZonalWindsWicht} as well as 
\cite{TGWEZhang}.
\begin{definitionS}\label{def:StatQuant}
    The force balance, expressed by the Euler equation \eqref{eq:Euler} or the adapted version \eqref{eq:UnperThermalWind}, is by far dominated by the hydrostatic balance between pressure gradient and (modified) gravity potential. This defines a \textbf{stationary background state} $p_0,\rho_0, V_0$ without flow on $B$ via
    \begin{equation}\label{eq:HydrEquib}
        0 = -\nabla p_0-\rho_0\nabla(V_0+V_{\Omega}).
    \end{equation}
\end{definitionS}
\begin{definitionS}\label{def:DynQuant}
    We define the \textbf{dynamical quantities} or perturbations $\rho'$, $V'$ and $p'$ as 
    the disturbances caused by the zonal flow
    \begin{align*}
        \bm{u}'(\bx) & :=\bm{u}(\bx)-\bm{u}_0(\bx)=\bm{u}(\bx),\text{ so } \\
        \rho'(\bx) & :=\rho(\bx)-\rho_0(\bx), \\
        V'(\bx) & :=V(\bx)-V_0(\bx)\text{ and } \\
        p'(\bx) & :=p(\bx)-p_0(\bx).
    \end{align*}
    Here $\rho$, $V$ and $p$ satisfy Equation \eqref{eq:UnperThermalWind}.
    The interpretation is that $\rho',V'$ and $p'$ are purely caused
    by the wind field $\bm{u}$. Because of the linearity of the gravitational
    potential, $V'$ and $\rho'$ satisfy the relation in equation \eqref{eq:GravPot}.
\end{definitionS}
Plugging this into Equation \eqref{eq:UnperThermalWind} and neglecting the terms of 
second order in the dynamic quantities (this includes combinations of primed quantities 
and $\bm{u}$, for example the term \newline$2(\bm{\Omega}\cdot\nabla)(\rho'\bm{u})$) yields
\begin{equation}\label{eq:ThermalWindPer}
    2(\bm{\Omega}\cdot\nabla)(\rho_0\bm{u})= \nabla \rho'
    \times\nabla(V_0+V_{\Omega})-\nabla V'\times\nabla \rho_0\text{ on }B.
\end{equation}
We have used $0 = \nabla \rho_0\times\nabla(V_0+V_{\Omega})$, which is the curl of Equation
\eqref{eq:HydrEquib}. 
Having now split up the problem into two smaller problems, we first 
determine the quantities $\rho_0,V_0$ and $p_0$ from the equation in Definition 
\ref{def:StatQuant}. 
For this an appropiate model for a rigidly rotating gas giant is 
needed. 
Having determined the hydrostatic background quantities, we can plug these into the equations in 
Equation \eqref{eq:ThermalWindPer} and try to determine $\rho'$ from $\bm{u}$.

%Static Model
%--------------------------------------------------------------------------------------
\subsection{The Static Problem}\label{sec:StatModel}
Many models have been considered to calculate the interior structure of a rigidly rotating
gas giant. There is the theory of figures from \cite{ZharkovTrubitsyn}, further developed in
\cite{ConvToFHubbard,GravHarmRigidJupiterNettelmann,ToFSeventhOrderNettelmann}, which 
determines the shape and the gravitational potential of a gas giant by utilizing an expansion 
in the small parameter $q = \Omega^2R^3/(GM)$. A second method, developed in \cite{CMSHubbard}
takes advantage of the fact that the gravitational potential of a Maclaurin spheroid with 
constant density can be calculated analytically. A general model of the gas giants interior is 
then constructed by layering concentric Maclaurin spheroids with constant density in a 
self-consistent manner (this is called the CMS-model). See 
\cite{CMSHubbardBasics,CMSHubbard,CompLayerModelsHubbard,DiffRotCompWisdom} for further 
explanation and usage of this model.
A third model which is worth mentioning is explained in 
\cite{GravSignZonalFlowsHubbard,NonPertubEquilbWisdom,DiffRotCompWisdom} and uses the 
assumption of a \textbf{polytrope of index unity} to derive a solution in terms of 
spherical Bessel functions. 
This assumption has been studied for a long time \cite{GravFieldPolytropeHubbard} and we will use it to solve for the static solutions $p_0,\rho_0\in \Cont^2(\overline{B})$ and $V_0$ of Equation \eqref{eq:HydrEquib}.
A somewhat more general approach for an adiabatic background state can be found in \cite{ZonalWindsWicht}
\begin{assumptionS}\label{asmp:PolytropeStatModel}
    In the following we will assume that the static density and pressure
    obey a \textbf{polytrope of index unity}, which means that there exists some 
    constant $K>0$ such that
    \begin{align*}
        p_0=K\rho_0^2\text{ on }B.
    \end{align*}
    In general a polytrope of index $n\in\N$ refers to the relation
    $p_0=K\rho_0^{1+1/n}$.
\end{assumptionS}
It is easy to see that under Assumption \ref*{asmp:PolytropeStatModel}, we can 
calculate the gradient of the static potential, due to Equation \eqref{eq:HydrEquib},
by the formula
\begin{equation}
    \nabla (V_0+V_{\Omega}) = -2K\nabla \rho_0\text{ on }B.\label{eq:PotGradStatic}
\end{equation}
Taking the divergence of Equation \eqref{eq:PotGradStatic} yields
\begin{alignat}{3}\nonumber
    & \Delta V_0+\Delta V_{\Omega} &&= -2K\Delta \rho_0 \\\nonumber
    \Leftrightarrow & (4\pi G+2K\Delta)\rho_0 &&= -\Delta V_{\Omega} \\\label{eq:HelmholtzDensity}
    \Leftrightarrow & \left(\Delta+\frac{2\pi G}{K}\right)\rho_0 &&= \frac{\Omega^2}{K}
\end{alignat}
on $B$, where we have used Poisson's equation $\Delta V_0 = 4\pi G\rho_0$ and 
$\Delta V_{\Omega} = -2\Omega^2$.
Following \cite{GravSignatureGasGiantsKaspi,ZonalWindsWicht} we will next assume 
spherical symmetry.
\begin{assumptionS}\label{asmp:RadIndependence}
    The mass density $\rho_0$ only depends on $|\bx|$ or $r$ in spherical 
    coordinates. This assumption is justified because the centrifugal potential $V_{\Omega}$
    is small compared to the radially symmetric background potential $V_0$. In particular, this means that Equation \eqref{eq:HelmholtzDensity} becomes homogeneous.
\end{assumptionS}
Solving Equation \eqref{eq:HelmholtzDensity} in the case of radial symmetry is relatively straight-forward:
We only need the radial part $\partial_r^2+\frac{2}{r}\partial_r$ of the Laplacian $\Delta$, multiplying 
Equation \eqref{eq:HelmholtzDensity} by $r^2$ then yields
\begin{equation*}
    r^2\frac{\di^2}{\di r^2}\rho_0(r)+2r\frac{\di}{\di r}\rho_0(r)+\frac{2\pi G}{K}r^2\rho_0(r) = 0
\end{equation*}
for every $r\in[0,R]$. Substituting $\gamma := \sqrt{2\pi G/K}$ and $s := \gamma r$ as well 
as $G(s) := \rho_0(r)$ in the above equation gets us
\begin{equation}
    s^2\frac{\di^2}{\di s^2}G(s)+2s\frac{\di}{\di s}G(s)+s^2G(s) = 0.
\end{equation}
The two linearly independent solutions of this equation are the spherical Bessel functions 
of degree $0$, $j_0$ and $y_0$ (see Appendix 
\ref{sec:BesselProp}), so $\rho_0$ is given by 
\begin{equation*}
    \rho_0(r) = cj_0(\gamma r)+dy_0(\gamma r).
\end{equation*}
Since we want our density to be regular at $r=0$ and $y_0$ has a singularity at this spot, 
we conclude $d=0$. So far we deduced the following expression for $\rho_0$:

\begin{equation*}
    \rho_0(r) = cj_0\left(\sqrt{\frac{2\pi G}{K}}r\right),
\end{equation*}
where $c\in\R$ is an arbitrary constant. Since $\rho_0$ is supposed to be the mass density 
of a gas giant, we demand $\rho_0\geq 0$ on $[0,R]$ and $\rho_0(R)=0$. These conditions lead us to 
$\gamma = \pi/R$ or $K=\frac{2GR^2}{\pi}$. Finally we normalize $\rho_0$ such that 
$\int_B\rho_0(\bx)\di\bx=M$, meaning that we can determine $c$. Using the substitution
$s = \pi r/R$ we conclude that
\begin{align*}
    \int_Bj_0\left(\frac{\pi}{R}|\bx|\right)\di\bx &= 4\pi\int_{0}^{R}r^2j_0
    \left(\frac{\pi}{R}r\right)\di r = 4\pi\int_{0}^{\pi}\left(\frac{R}{\pi}s\right)^2
    j_0(s)\frac{R}{\pi}\di s \\ 
    &= \frac{4R^3}{\pi^2}\int_{0}^{\pi}s\sin(s)\di s = \frac{4R^3}{\pi}.
\end{align*}
Therefore the constant $c$ must have the value $\pi M/(4R^3)$ and our density looks 
like
\begin{equation*}
    \rho_0(r) = \overline{\rho_0}\,\frac{\pi^2}{3}j_0\left(\pi\frac{r}{R}\right),
\end{equation*}
where $\overline{\rho_0}=3M/(4\pi R^3)$ is the mean density.

%The Dynamic Problem
%--------------------------------------------------------------------------------------
\subsection{The Dynamic Problem}\label{sec:TGWEq}
To solve the dynamic problem there are essentially two approaches. The first, considered by 
\cite{CylindWindsKaspi,kaspiGravitationalSignatureJupiters2010} for example makes the additional
assumption that the gravitational acceleration caused by the wind field will be negligible, meaning
that
\begin{equation}\label{eq:TWAssump}
    \nabla V'(\bx) \approx 0\quad\text{for every }\bx\in B.
\end{equation}
With this assumption Equation \eqref{eq:ThermalWindPer} becomes simpler again, which yields 
 the \textbf{Thermal Wind Equation}
\begin{equation}\label{eq:ThermalWind}
    2(\bm{\Omega}\cdot\nabla)(\rho_0\bm{u})= \nabla \rho'
    \times\nabla(V_0+V_{\Omega})\quad\text{on }B.
\end{equation}
We now use Equation \eqref{eq:PotGradStatic} and then Assumption \ref{asmp:RadIndependence},
which implies that
\begin{align*}
    2(\bm{\Omega}\cdot\nabla)(\rho_0\bm{u}) & =
    \nabla \rho_0\times\nabla(2K\rho') \\
    \Leftrightarrow 2(\bm{\Omega}\cdot\nabla)(\rho_0\bm{u}) & =
    \frac{\partial_r \rho_0}{r}\Lup^{\ast}(2K\rho').
\end{align*}
It is easy to see that in the case of $\varphi$-independence, $\Lup^{\ast}$ 
only has a component in the $\beps^{\varphi}$-direction, so we take the inner product 
with this vector on both sides. With the definition $u_{\varphi}:=\bm{u}\cdot
\bm{e}^{\varphi}$ we arrive at the equation
\begin{equation}\label{eq:TWSimplestSolStep1}
    2(\bm{\Omega}\cdot\nabla)(\rho_0u_{\varphi}) = 
    \frac{\partial_r\rho_0}{r}\left(\bm{e}^{\varphi}\cdot\Lup^{\ast}\right)
    (2K\rho')\text{ on }B,
\end{equation}
because $\bm{\Omega}\cdot\nabla$ only involves derivatives 
in the $x_3$-direction. At this point it is useful to introduce the following
quantity.
\begin{definitionS}\label{def:SourceTGW}
    We define the \textbf{thermal-wind source term} $S^u$ as the function 
    \begin{align*}
        S^u(r\bi(\varphi,t)) := -\frac{4\pi G}{K}\,\frac{r}{\partial_r
        \rho_0(r\bi(\varphi,t))}\int_{-1}^{t}\frac{(\bm{\Omega}\cdot\nabla)(\rho_0u_{\varphi})
        (r\bi(\varphi,\tau))}{\sqrt{1-\tau^2}}\di\tau,
    \end{align*}
    where $r\in[0,R]$, $\varphi\in[0,2\pi)$ and $t\in[-1,1]$ are spherical coordinates.
    The quantities $\rho^0$ and $u_{\varphi}$ are by assumption independent of the parameter 
    $\varphi$ in spherical coordinates. Thus the quantity $S^u$ also satisfies $\partial_{\varphi}S^u=0$, just as $\rho^0$ and $u_{\varphi}$ do.
\end{definitionS}
We notice that $(\beps^{\varphi}\cdot\Lup^{\ast})Q = -\sqrt{1-t^2}\partial_tQ$ and conclude 
that Equation \eqref{eq:TWSimplestSolStep1} yields after a rearrangement the identity 
\begin{equation*}
    -\frac{2(\bm{\Omega}\cdot\nabla)(\rho_0u_{\varphi})(r\bi(\varphi,t))}{\sqrt{1-t^2}}\cdot
    \frac{r}{\partial_r\rho_0(r\bi(\varphi,t))} = \partial_t\left(2K\rho'(r\bi(\varphi,t))\right)
\end{equation*}
for all $(r,\varphi,t)$. Now, we integrate both sides with respect to $t$, then we get a 
constant of integration $c$ which only depends on $r$. With Definition \ref*{def:SourceTGW} 
this results in
\begin{align*}
    \frac{K}{2\pi G}S^u(r\bi(\varphi,t))+c(r) = 2K\rho'(r\bi(\varphi,t)).
\end{align*}
Since $\Delta V' = 4\pi G \rho'$ is satisfied on $B$ we get
\begin{alignat}{3}\nonumber
    && 4\pi G \rho'(r\bi) &= S^u(r\bi)+\eta(r)\\ \label{eq:PoissonKaspi}
    \Leftrightarrow && \Delta V'(r\bi) &= S^u(r\bi)+\eta(r),
\end{alignat}
where $\eta(r):=2\pi Gc(r)/K $. With this equation we can determine $\rho'$ (or equivalently $\Delta V'$)
up to an additive function only dependent on the radius, if we know the wind field 
$u_{\varphi}$. As we will see later, this is enough to fully determine the gravitational field 
outside of the planet.

We will also highlight a second method to solve the dynamic problem, which was introduced by 
\cite{TGWEZhang} and further developed by \cite{ZonalWindsWicht}. This approach does not 
neglect the term $\nabla V'$, thus we start out with Equation \eqref{eq:ThermalWindPer}, which 
is called the \textbf{Thermo-Gravitational Wind Equation} in this context. Proceeding in exactly the 
same fashion as with Equation \eqref{eq:ThermalWind} (applying Equation \eqref{eq:PotGradStatic}
and then Assumption \ref{asmp:RadIndependence}) we arrive at
\begin{equation*}
    2(\bm{\Omega}\cdot\nabla)(\rho_0\bm{u}) =
    \frac{\partial_r \rho_0}{r}\Lup^{\ast}(2K\rho'+V')\quad\text{ on }B.
\end{equation*}
This is almost the same equation as Equation \eqref{eq:TWSimplestSolStep1}, but we have replaced 
$2K\rho'$ with $2K\rho'+V'$. Thus we can transform our equation in exactly the same way as above, yielding
\begin{equation*}
    \frac{K}{2\pi G}S^u(r\bi(\varphi,t))+c(r) = 2K\rho'(r\bi(\varphi,t))+
    V'(r\bi(\varphi,t))
\end{equation*}
with an indeterminate function $c(r)$ which only depends on the radius $r$.
Finally, using that $\Delta V' = 4\pi G \rho'$ is satisfied on $B$ and 
$2\pi G/K = \pi^2/R^2$ we obtain the inhomogeneous Helmholtz equation
\begin{equation}\label{eq:HelmholtzWicht}
    \left(\Delta+\frac{\pi^2}{R^2}\right) V'(r\bi) = 
    S^u(r\bi)+\eta(r),
\end{equation}
where $\eta(r):=2\pi Gc(r)/K $. To determine $\rho'$ or $\Delta V'$ from this equation,
a suitable orthonormal basis is necessary, which was first introduced by 
\cite{ZonalWindsWicht}.

%Orthonormal basis for gravitational potential
%--------------------------------------------------------------------------------------
\section{An Orthonormal Basis for the Gravitational Potential}\label{sec:ONBGravPot}
We now construct an orthonormal basis of $\LSq(B)$ which is useful for expanding 
the gravitational potential in $B$, as is done in \cite{ZonalWindsWicht}. 
Of course we know that the gravitational potential 
$V$ is in $\Cont^1(\R^3)$, satisfies $\Delta V = 0$ on $\R^3\setminus\overline{B}$ and 
is regular at infinity. It is well-known (see for example \cite*{Geomath}) that these 
conditions fully determine $V$ on $\R^3\setminus\overline{B}$, namely
\begin{equation}\label{eq:ExtDPSol}
    V(r\bi)=\sum_{n=0}^{\infty}\sum_{j=-n}^{n}\langle V(R\,\cdot),Y_{n,j}
    \rangle_{\LSq(\S^2)}\left(\frac{R}{r}\right)^{n+1}Y_{n,j}(\bi)
\end{equation}
holds for every $r>R$. At this point a note on notation is necessary. For a function $f:\R^3\to\R$ and a fixed $s>0$, the function $f(s\cdot)$ is defined via 
\begin{align*}
    f(s\cdot):\S^2\to\R,\, \bi\mapsto f(s\bi).    
\end{align*}
The dot in the argument of $f$ symbolizes that we keep the radial coordinate $s$ fixed here and we only consider the dependence of the function on the direction $\bi$.
Equation \eqref{eq:ExtDPSol} allows us to determine the potential on $\partial B$,
because in the sense of $\LSq(\S^2)$ we have 
\begin{align}\nonumber
    V(R\bi) &= \sum_{n=0}^{\infty}\sum_{j=-n}^{n}\langle V(R\,\cdot),
    Y_{n,j}\rangle_{\LSq(\S^2)}Y_{n,j}(\bi)\quad\text{and}\\\label{eq:BoundCondPot}
    \partial_rV(r\bi)\big|_{r=R} &= -\sum_{n=0}^{\infty}\sum_{j=-n}^{n}\frac{n+1}{R}
    \langle V(R\,\cdot),Y_{n,j}\rangle_{\LSq(\S^2)}Y_{n,j}(\bi).
\end{align}
These formulas will motivate the boundary conditions which our basis should satisfy.
The structure of our basis will follow the decomposition of the inner product on $\LSq(B)$
into an angular and a radial part. An arbitrary basis function $u\in\LSq(B)$ should have 
the form 
\begin{equation*}
    u(\bx) = u(r\bi) = H(r)Y_{n,j}(\bi)
\end{equation*}
for every $\bx\neq 0$. Here the $Y_{n,j}$ are the fully normalized spherical harmonics, 
which form an orthonormal basis of $\LSq(\S^2)$. It is then easy to see that our functions
$H$ should form an orthonormal basis of $\LSq_w(0,R)$, where $w(r):=r^2$.
Since we want some similarity to the gravitational potential, we 
demand a boundary condition motivated by Equation \eqref{eq:BoundCondPot} and the 
regularity of $V$ in the interior of $B$: the function $H$ which forms the radial part of 
one of our basis functions $u$ satisfies $H\in\Cont^2[0,R]$ and $H'(R) = 
-(n+1)H(R)/R$, where the prime $'$ just denotes the derivative of $H$ with respect to the variable $r$. Since we often encounter Laplacians when dealing with potentials,
we also demand that our basis functions are eigenfunctions of the Laplacian on $B$. 
This means any basis function $u$ should satisfy
\begin{align}\label{eq:EFLaplace}
    \Delta u = -\lambda u\quad\text{on } B.
\end{align}
At first we only demand $\lambda\in\C$ but we will restrict this later. Using the 
property $\Delta^{\ast}Y_{n,j} = -n(n+1)Y_{n,j}$ and the decomposition of the Laplacian
\begin{equation*}
    \Delta = \partial_r^2+\frac{2}{r}\partial_r+\frac{1}{r^2}\Delta^{\ast}
\end{equation*}
on Equation \eqref{eq:EFLaplace} yields the following for the radial part $H\in\Cont^2[0,R]$ 
of our basis function
\begin{alignat}{3}\nonumber
    && H''(r)Y_{n,j}(\bi)+\frac{2}{r}H'(r)Y_{n,j}(\bi)-\frac{n(n+1)}{r^2}H(r)Y_{n,j}(\bi) &= 
    -\lambda H(r)Y_{n,j}(\bi) \\ \nonumber
    \Leftrightarrow && \frac{1}{r^2}\left(\frac{\di}{\di r}\left(r^2\frac{\di}{\di r}
    H(r)\right)-n(n+1)H(r)\right)Y_{n,j}(\bi) &= -\lambda H(r)Y_{n,j}(\bi) \\
    \label{eq:SLOpRad}
    \Leftrightarrow && \Lup H(r) := \frac{\di}{\di r}\left(r^2\frac{\di}{\di r}
    H(r)\right)-n(n+1)H(r) &= -\lambda w(r)H(r),
\end{alignat}
which holds for every $r\in(0,R)$. In the last step we have used the fact that the 
$Y_{n,j}$ only have isolated zeros on $\S^2$, meaning we can factor these functions 
out of the equation. We also demand the boundary condition
\begin{equation}\label{eq:BC}
    H'(R) +\frac{n+1}{R}H(R) = 0.
\end{equation}
We will call $\lambda$ the eigenvalue of the problem and $H$ the eigenfunction.
Of course we get a different boundary value problem for every $n\in\N_0$, but we 
will suppress this in the notation for now.\\
Now we need to find every eigenvalue and -function of the problem defined above. 
First we will prove some general properties of the eigenvalues $\lambda$.
\begin{lemmaS}
    If we define the vector space 
\begin{equation*}
    V := \{h\in\Cont^2[0,R] \,|\, h\text{ satisfies \eqref{eq:BC}}\}
\end{equation*}
together with the inner product 
\begin{equation*}
    \langle F,G\rangle_V := \int_{0}^{R}F(r)G(r)\di r
\end{equation*}
for two functions $F,G\in V$, then the operator $\Lup$ defined in Equation 
\eqref{eq:SLOpRad} is self-adjoint and negative semi-definite on 
$(V,\langle\cdot,\rangle_V)$.
\end{lemmaS}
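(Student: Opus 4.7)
The plan is to verify both properties by a single integration-by-parts computation that expresses $\langle \Lup F, G\rangle_V$ in a manifestly symmetric form. First I would write
\begin{equation*}
    \langle \Lup F, G\rangle_V = \int_{0}^{R}\frac{\di}{\di r}\!\left(r^2 F'(r)\right) G(r)\,\di r - n(n+1)\int_{0}^{R}F(r)G(r)\,\di r
\end{equation*}
and integrate the first integral by parts once. Since $F,G\in\Cont^2[0,R]$, the boundary term at $r=0$ vanishes because of the factor $r^2$, while at $r=R$ the boundary condition \eqref{eq:BC} lets us replace $F'(R)$ by $-\frac{n+1}{R}F(R)$, converting $R^2 F'(R)G(R)$ into $-(n+1)R\,F(R)G(R)$.

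The outcome of this step is the symmetric identity
\begin{equation*}
    \langle \Lup F, G\rangle_V = -(n+1)R\,F(R)G(R) - \int_{0}^{R} r^2 F'(r)G'(r)\,\di r - n(n+1)\int_{0}^{R} F(r)G(r)\,\di r.
\end{equation*}
Self-adjointness is then immediate, since the right-hand side is invariant under swapping $F$ and $G$: comparing with the analogous computation of $\langle F, \Lup G\rangle_V$ gives $\langle \Lup F, G\rangle_V = \langle F, \Lup G\rangle_V$ for all $F,G\in V$.

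For negative semi-definiteness I would then simply set $G=F$ in the identity, obtaining
\begin{equation*}
    \langle \Lup F, F\rangle_V = -(n+1)R\,F(R)^2 - \int_{0}^{R} r^2 F'(r)^2\,\di r - n(n+1)\int_{0}^{R} F(r)^2\,\di r.
\end{equation*}
For every $n\in\N_0$ each of the three summands is non-positive (the middle term comes from a non-negative integrand weighted by $r^2\geq 0$, and the factors $(n+1)R$ and $n(n+1)$ are non-negative), hence $\langle \Lup F, F\rangle_V \leq 0$.

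There is no genuine obstacle here; this is the standard Sturm–Liouville symmetrization. The only point that needs a brief justification is the vanishing of the boundary term at the singular endpoint $r=0$, which follows because $F'$ and $G$ are continuous on $[0,R]$ while the prefactor $r^2$ tends to zero. The Robin-type boundary condition at $r=R$ is precisely what is needed to cancel the would-be boundary contribution there in a symmetric way, and it is also what produces the first (non-positive) term in the energy identity.
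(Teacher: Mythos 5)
Your proof is correct and follows essentially the same route as the paper: integration by parts with the Robin condition \eqref{eq:BC} handling the boundary term at $r=R$ and the factor $r^2$ killing it at $r=0$. The only (harmless) difference is that you integrate by parts once and read both self-adjointness and negative semi-definiteness off the resulting symmetric bilinear form, whereas the paper's sketch integrates by parts twice for the symmetry claim.
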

\begin{proof}[Proof]
    For the self-adjointness we prove that $\langle \Lup F,G\rangle_V=\langle F,
    \Lup G\rangle_V$ for any two functions $F,G\in V$. This involves integrating by parts
    twice. For the negative-semi definite claim, we prove that 
    $\langle\Lup F,F\rangle_V\leq 0$ for any $F\in V$, integrating by parts once.
    In both cases we have to use Equation \eqref{eq:BC} on the boundary terms resulting 
    from integration by parts.
\end{proof}
\begin{remarkS}\label{rem:OGEF}
    As it is well-known, self-adjoint operators only have real eigenvalues, which, in 
    the case of negative-semi-definiteness, need to be non-positive. Therefore we conclude
    that our numbers $\lambda$ solving Equation \eqref{eq:EFLaplace} must satisfy $\lambda\geq 0$ and will write $\lambda=\gamma^2$
    for some $\gamma\geq 0$ from now on. Finally we note that the eigenfunctions corresponding
    to different eigenvalues will be orthogonal.
\end{remarkS}
\begin{lemmaS}\label{lma:FormulaEF}
    For every arbitrary but fixed $n\in\N_0$, the $\Cont^2[0,R]$-solutions to Equation 
    \eqref{eq:SLOpRad} with boundary condition \eqref{eq:BC}, which are normalized in 
    $\LSq_w(0,R)$, are 
    \begin{equation*}
        B_{m,n}(r) := \pm\sqrt{\frac{2}{R^3j_n^2(\gamma_{n,m}R)}}j_n(\gamma_{n,m}r),
    \end{equation*}
    where the $(\gamma_{n,m})_{m\in\N}$ are the distinct positive solutions to 
    \begin{equation*}
        0 = j_{n-1}(\gamma R).
    \end{equation*}
    Here the $j_n$ are the spherical Bessel functions of the first kind, as defined in 
    Definition \ref{def:BesselFunc}.
\end{lemmaS}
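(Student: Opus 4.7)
The plan is to reduce Equation \eqref{eq:SLOpRad} to the spherical Bessel differential equation, select the solution that is regular at the origin, pin down the admissible values of $\gamma$ via the boundary condition \eqref{eq:BC}, and then compute the $\LSq_w(0,R)$-norm in closed form.

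First, expanding the derivative in \eqref{eq:SLOpRad} and dividing by $r^2$ yields, on $(0,R)$,
\begin{equation*}
    H''(r) + \frac{2}{r} H'(r) + \left(\gamma^2 - \frac{n(n+1)}{r^2}\right) H(r) = 0,
\end{equation*}
which is the spherical Bessel equation of order $n$ with argument $\gamma r$. By the facts collected in Appendix \ref{sec:BesselProp}, every solution is of the form $c_1 j_n(\gamma r) + c_2 y_n(\gamma r)$, and the requirement $H\in\Cont^2[0,R]$ rules out the singularity of $y_n$ at the origin, forcing $c_2=0$. The edge case $\gamma=0$ reduces \eqref{eq:SLOpRad} to the Laplace radial equation, whose only solution regular at $0$ is $H(r)=c\,r^n$; substituting into \eqref{eq:BC} gives $(-1)\cdot c\, R^{n-1}=0$ and hence $c=0$, so $\gamma=0$ is not an eigenvalue.

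Next I would insert $H(r)=c_1 j_n(\gamma r)$ into \eqref{eq:BC} and simplify using the standard recurrence $j_n'(x) = j_{n-1}(x) - \tfrac{n+1}{x} j_n(x)$. The contributions proportional to $j_n(\gamma R)$ cancel exactly, leaving the clean condition $c_1 \gamma\, j_{n-1}(\gamma R) = 0$. Since we already excluded $\gamma=0$ and the trivial solution $c_1=0$, this means that the admissible parameters are precisely the distinct positive roots $(\gamma_{n,m})_{m\in\N}$ of $j_{n-1}(\,\cdot\,R)$, matching the stated characterisation.

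Finally I would normalise in $\LSq_w(0,R)$ using the classical Lommel-type identity
\begin{equation*}
    \int_0^R j_n^2(\gamma r)\, r^2 \di r \;=\; \frac{R^3}{2}\left[j_n^2(\gamma R) - j_{n-1}(\gamma R)\, j_{n+1}(\gamma R)\right],
\end{equation*}
which follows from multiplying \eqref{eq:SLOpRad} by $H$, integrating by parts, and evaluating the boundary terms via the Bessel recurrences. At $\gamma=\gamma_{n,m}$ the second bracketed term vanishes, so the squared norm reduces to $\tfrac{R^3}{2} j_n^2(\gamma_{n,m}R)$, giving the prefactor in the claim (up to the sign ambiguity inherent in normalising an eigenfunction). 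The only subtle point — the main obstacle, if any — is verifying that $j_n(\gamma_{n,m} R)\neq 0$, so that the normalisation constant is well defined; this follows from the strict interlacing of the zeros of $j_{n-1}$ and $j_n$, which I would cite or derive briefly from the Wronskian of two consecutive spherical Bessel functions. Everything else is routine bookkeeping.
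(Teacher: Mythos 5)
Your proposal is correct and follows essentially the same route as the paper's proof: reduction of \eqref{eq:SLOpRad} to the spherical Bessel equation, discarding $y_n$ by regularity at the origin, eliminating $\gamma=0$ through the Cauchy--Euler case, obtaining the condition $j_{n-1}(\gamma R)=0$ from the recurrence, and normalizing via the classical Dini/Lommel integral, which the paper invokes as Theorem \ref{thm:NormalizedBessel} after converting to $J_{n+1/2}$ while you quote the equivalent spherical-Bessel form $\int_0^R r^2 j_n^2(\gamma r)\di r = \tfrac{R^3}{2}\bigl(j_n^2(\gamma R)-j_{n-1}(\gamma R)j_{n+1}(\gamma R)\bigr)$. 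The only blemish is the $\gamma=0$ boundary check, where the boundary condition \eqref{eq:BC} yields the coefficient $(2n+1)c\,R^{n-1}$ rather than $-c\,R^{n-1}$ --- harmless, since either way $c=0$.
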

\begin{proof}[Proof]
    We substitute $\lambda=\gamma^2$ for some $\gamma\geq 0$, then equation 
    \eqref{eq:SLOpRad} is equivalent to 
    \begin{equation*}
        r^2H''(r)+2rH'(r)+(\gamma^2r^2-n(n+1))H(r) = 0\quad\text{for every }r\in(0,R).
    \end{equation*}
    For $\gamma=0$, we have a Cauchy-Euler-equation with general solution of the form 
    $H(r) = b_1r^n+b_2r^{-n-1}$. This leaves only the trivial solution since regularity
    at $r=0$ yields $b_2=0$ and Equation \eqref{eq:BC} yields $b_1=0$.
    For $\gamma>0$, the general solution is 
    \begin{equation*}
        H(r) = c_1j_n(\gamma r)+c_2y_n(\gamma r),
    \end{equation*}
    where the $j_n$ and $y_n$ are the spherical Bessel functions of the first and second 
    kind. As our functions should be defined at $r=0$, while the functions $y_n$ are not, 
    we must conclude that $c_2=0$. To determine $\gamma$, we use the recurrence relation 
    found in 10.1.21 in \cite*{AaS} 
    \begin{align*}
        j_{n-1}(x) = j_n'(x)+\frac{n+1}{x}j_n(x)\quad\text{for every }n\in\N_0
        \text{ and }x>0.
    \end{align*}
    Then $\gamma$ must satisfy $j_{n-1}(\gamma R)=0$. It turns out that there are a 
    countable number of distinct positive solutions $(\gamma_{n,m})_{m\in\N}$ to this 
    equation (this follows from 9.5.2 in \cite*{AaS} as well as 15.21, 15.22 and the 
    fifth paragraph in 15.28 in \cite*{BesselWatson}). 
    To normalize the solution $H$ in $\LSq_w(0,R)$, we must choose $c_1$ such that 
    \begin{equation*}
        c_1 = \pm\left(\int_{0}^{R}r^2j_n^2(\gamma_{n,m}r)\di r\right)^{-1/2}.
    \end{equation*}
    With the definition $\lambda_{n,m}:=\gamma_{n,m}R$ and the substitution
    $s=r/R$ we conclude that 
    \begin{equation*}
        \frac{1}{c_1^2} = R^3\int_{0}^{1}s^2j_n^2(\lambda_{n,m}s)\di s,
    \end{equation*}
    where the $(\lambda_{n,m})_{m\in\N}$ are all the positive solutions of 
    $j_{n-1}(\lambda)=0$. Using Definition \ref{def:BesselFunc} we see that the 
    coefficients $\lambda_{n,m}$ comprise every positive solution of 
    $J_{n-1/2}(\lambda) = 0$ or equivalently with Lemma \ref{lma:RecurrenceBessel}
    applied for $\nu = n+1/2$
    \begin{equation*}
        \left(n+\frac{1}{2}\right)J_{n+1/2}(\lambda)+\lambda J_{n+1/2}'(\lambda)=0.
    \end{equation*}
    Using Definition \ref{def:BesselFunc} again and subsequently Theorem 
    \ref{thm:NormalizedBessel} leads us to
    \begin{equation*}
        \frac{1}{c_1^2} = \frac{\pi R^3}{2\lambda_{n,m}}\int_{0}^{1}s
        J_{n+1/2}^2(\lambda_{n,m}s)\di s = 
        \frac{R^3}{2}\,\frac{\pi}{2\lambda_{n,m}}J_{n+1/2}^2(\lambda_{n,m}) =
        \frac{R^3}{2}j_n^2(\gamma_{n,m}R).\qedhere
    \end{equation*}
\end{proof}
\begin{theoremS}\label{thm:ONBRadial}
    For a fixed but arbitrary $n\in\N_0$ and with the definitions from 
    above, the sequence of functions $(B_{m,n})_{m\in\N}$ -- where we always choose the
    $+$--variant from Lemma \ref{lma:FormulaEF} -- forms an orthonormal basis of 
    $\LSq_w(0,R)$.
\end{theoremS}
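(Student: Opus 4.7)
The plan is to split the assertion into orthonormality---which essentially follows at once from what has already been established---and completeness, which is the substantive part, and to reduce the latter to a classical Fourier--Bessel expansion via an explicit isometric isomorphism onto $\LSq_r(0,R)$.

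For orthonormality I would first note that Lemma \ref{lma:FormulaEF} is designed so that $\|B_{m,n}\|_{\LSq_w(0,R)}=1$. For orthogonality when $m\neq m'$, the standard Sturm--Liouville calculation---insert the eigenvalue equations $\Lup B_{m,n}=-\gamma_{n,m}^2 w B_{m,n}$ and $\Lup B_{m',n}=-\gamma_{n,m'}^2 w B_{m',n}$ into the self-adjointness identity $\langle \Lup B_{m,n},B_{m',n}\rangle_V = \langle B_{m,n},\Lup B_{m',n}\rangle_V$ supplied by the preceding lemma, then use that the eigenvalues are distinct---yields $\langle B_{m,n},B_{m',n}\rangle_{\LSq_w(0,R)}=0$. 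Note that the weight $w$ appears attached to the eigenvalue, so the orthogonality produced is in $\LSq_w(0,R)$ rather than in $V$.

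For completeness I would introduce the map $T:\LSq_w(0,R)\to \LSq_r(0,R)$ defined by $(Tf)(r):=\sqrt{r}\,f(r)$. A one-line change of variables shows that $T$ is a surjective isometry between these two weighted Hilbert spaces. Using the identity $j_n(x)=\sqrt{\pi/(2x)}\,J_{n+1/2}(x)$ from Definition \ref{def:BesselFunc}, $T B_{m,n}$ is a strictly positive constant multiple of $J_{n+1/2}(\gamma_{n,m}\,\cdot)$. The claim thus reduces to completeness of the system $\{J_{n+1/2}(\gamma_{n,m}\,\cdot)\}_{m\in\N}$ in $\LSq_r(0,R)$, where the $\gamma_{n,m}R$ are the positive roots of the Robin-type relation
\begin{equation*}
    \bigl(n+\tfrac{1}{2}\bigr)J_{n+1/2}(\gamma R)+\gamma R\,J_{n+1/2}'(\gamma R)=0,
\end{equation*}
which is the equivalent of $j_{n-1}(\gamma R)=0$ already derived in the proof of Lemma \ref{lma:FormulaEF}.

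This reduction places the claim squarely inside the classical Fourier--Bessel (Dini-type) expansion theory, covered for instance in Watson \cite{BesselWatson} (which the paper already cites for counting zeros of $j_{n-1}$). The main obstacle I anticipate is not the reduction itself but the careful verification---uniformly in $n\in\N_0$---that the hypotheses of the Dini expansion theorem hold: the Bessel index satisfies $\nu=n+\tfrac{1}{2}>-1$, the boundary parameter $H=n+\tfrac{1}{2}$ satisfies $H+\nu>0$ (ruling out the degenerate \emph{sub-normal} case, in which an extra mode at $\gamma=0$ would have to be added to the basis), and $\gamma=0$ is not itself an eigenvalue---the last point was already observed in the Cauchy--Euler analysis at the start of the proof of Lemma \ref{lma:FormulaEF}. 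Once these are checked, the classical completeness theorem applies, and transferring back through $T^{-1}$ gives the desired orthonormal basis property for $(B_{m,n})_{m\in\N}$ in $\LSq_w(0,R)$.
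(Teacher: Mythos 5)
Your proposal is correct, but for the completeness part it takes a genuinely different route from the paper. The paper does not invoke a ready-made $\LSq$-completeness theorem for Dini systems; it builds one by hand: for $H\in\Cont^2_{\textup{c}}(0,R)$ it gets pointwise convergence of the partial Fourier sums from the pointwise Dini expansion theorem (Theorem \ref{thm:SeriesContBessel}, based on Watson 18.33), shows the partial sums are Cauchy in $\LSq_w(0,R)$ via the coefficient-decay estimate of Theorem \ref{thm:DecayFourierCoeffBessel} (which rests on the asymptotics of the zeros $\lambda_{n,m}$ and of $J_{n+1/2}$), identifies the $\LSq$-limit with $H$ through an a.e.-convergent subsequence, and concludes with a density/$\varepsilon/3$ argument. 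Your reduction via the isometry $f\mapsto\sqrt{r}\,f$ onto the $\LSq$-space with weight $r$, the identification of $B_{m,n}$ with positive multiples of $J_{n+1/2}(\gamma_{n,m}\,\cdot)$, and the verification of the Dini hypotheses ($\nu=n+\tfrac12>-1$, boundary constant $n+\tfrac12$ so that the sum is $2n+1>0$, no zero eigenvalue) are all sound and correspond exactly to what the paper checks implicitly (vanishing of Watson's $\mathcal{B}_0$ term; the Cauchy--Euler case in Lemma \ref{lma:FormulaEF}). The one point to nail down is the citation: Watson's Chapter XVIII, on which the paper relies, gives pointwise and uniform convergence for functions of bounded variation, not mean-square completeness, so ``the classical completeness theorem'' must either be quoted from a source that states $\LSq$-completeness of Dini systems explicitly (the literature on mean convergence of Fourier--Bessel/Dini series) or be derived from the pointwise theorem by precisely the density-plus-Cauchy argument the paper carries out. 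Your route buys brevity and modularity; the paper's buys self-containedness using only the pointwise results plus elementary estimates. Also note that no uniformity in $n$ is required: completeness is established for each fixed $n$ separately.
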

\begin{proof}[Proof]
    We know that the functions $B_{m,n}$ are eigenfunctions of the operator $\Lup$
    from Equation \eqref{eq:SLOpRad} for eigenvalues $\gamma_{n,m}^2$, which are all 
    distinct. This proves the orthogonality in $\LSq_w(0,R)$ by Remark \ref{rem:OGEF}. 
    The sequence is also normalized in this space as we have seen in Lemma 
    \ref{lma:FormulaEF}. What remains to be proven is the completeness in $\LSq_w(0,R)$. 
    To do this, we first take an arbitrary $H\in\Cont^2_{\textup{c}}(0,R)\subs 
    \LSq_w(0,R)$ (functions which are twice continuously differentiable and have compact 
    support in $(0,R)$) and show that the partial Fourier sum
    \begin{align*}
        H_M(r):=\sum_{m=1}^{M}\langle H,B_{m,n}\rangle_{\LSq_w(0,R)}B_{m,n}(r)
    \end{align*}
    converges to $H$ in $\LSq_w(0,R)$ as $M\to\infty$. This will happen in two steps. 
    First we show the pointwise convergence on $(0,R)$ and then the convergence in $\LSq$
    is achieved by proving the sequence $(H_M)_{M\in\N}$ is Cauchy.
    For both cases it is useful to transform the interval $[0,R]$ into $[0,1]$ by defining
    $h(s) := H(Rs)$ and $h_M(s) := H_M(Rs)$ for every $(r/R=)s\in[0,1]$. We also 
    substitute $\lambda_{n,m}:=\gamma_{n,m}R$. To make use of Appendix 
    \ref{sec:BesselProp} we also transform from spherical to regular Bessel functions (see again 
    Definition \ref{def:BesselFunc}). The $\lambda_{n,m}$ are the positive solutions to 
    $J_{n-1/2}(\lambda)=0$, which is equivalent to 
    \begin{equation*}
        \left(n+\frac{1}{2}\right)J_{n+1/2}(\lambda)+xJ_{n+1/2}'(\lambda) = 0
    \end{equation*}
    by recurrence relation \eqref{eq:RecurrenceBessel}. With a bit of rearranging and the 
    integral substitution $s=r/R$, we conclude that 
    \begin{equation*}
        \sqrt{s}h_M(s) = \sum_{m=1}^{M}\frac{2}{J_{n+1/2}^2(\lambda_{n,m})}
        c_mJ_{n+1/2}(\lambda_{n,m}s),
    \end{equation*}
    where $c_m$ is given by
    \begin{equation*}
        c_m := \int_{0}^{1}s\left(\sqrt{s}h(s)\right)J_{n+1/2}(\lambda_{n,m}s)\di s
        \quad\text{for every }m\in\N.
    \end{equation*}
    If we now define $g(s):=\sqrt{s}h(s)$, then this function vanishes in a 
    neighborhood of $s=0$, since $\textup{supp}(g) = \textup{supp}(h)\subsetneq (0,1)$. 
    We conclude that $g\in\Cont^1[0,1]$, so we can apply 
    Theorem \ref*{thm:SeriesContBessel}, which proves that 
    $\sqrt{s}h_M(s)\to\sqrt{s}h(s)$ as $M\to\infty$ for every $s\in(0,1)$. 
    It is then obvious that $H_M(r)\to H(r)$ as $M\to\infty$ for every $r\in(0,R)$. 

    In the second part of the proof we will show that the sequence $H_M$ is convergent in 
    $\LSq_w(0,R)$. We can quantify the decay of the Fourier coefficients using 
    Theorem \ref*{thm:DecayFourierCoeffBessel}, because $h\in\Cont^1[0,1]$ and
    \begin{align*}
        \langle H,B_{m,n}\rangle_{\LSq_w(0,R)} &= 
        R^3\int_{0}^{1}s^2H(Rs)B_{m,n}(Rs)\di s \\
        &= R^{3/2}\sqrt{\frac{2}{j_n^2(\lambda_{n,m})}}
        \int_{0}^{1}s^2h(s)j_n(\lambda_{n,m}s)\di s,
    \end{align*}
    where we used the substitution $s=r/R$ again. Therefore a natural number $K$ and a 
    constant $C>0$ exist such that 
    \begin{align*}
        \left|\langle H,B_{m,n}\rangle_{\LSq_w(0,R)}\right|\leq\frac{C}{m-3/4}\quad
        \text{for every }m\geq K.
    \end{align*}
    Now, for two natural numbers $N>M\geq K$, we use the orthonormality of the $B_{m,n}$ 
    to conclude that 
    \begin{equation*}
        \left\|H_N-H_M\right\|^2_{\LSq_w(0,R)}
        \leq \sum_{m=M+1}^{N}\frac{C^2}{(m-3/4)^2}\limto[{N,M}]0.
    \end{equation*}
    We have thus proven that $(H_M)_{M\in\N}$ is a Cauchy sequence in $\LSq_w(0,R)$, 
    meaning the limit 
    \begin{align*}
        \sum_{m=1}^{\infty}\langle H,B_{m,n}\rangle_{\LSq_w(0,R)}B_{m,n}
    \end{align*}
    exists as a $\LSq_w(0,R)$-function. It is well known that there will be a subsequence 
    which converges pointwise almost everywhere to this $\LSq_w(0,R)$-limit (see for example 
    the proof of Theorem 1.3 in Chapter I on pages 5--6 in \cite{SteinFuncAna}). However this 
    subsequence will also converge to $H$ pointwise, as we have previously seen, meaning 
    $H$ has to be the $\LSq_w(0,R)$-limit of $H_M$.
    Since $\Cont^2_{\textup{c}}(0,R)\subs \LSq_w(0,R)$ is dense (this follows from the fact 
    that $\Cont^{\infty}_{\textup{c}}(0,R)$ is dense in $\LSq_w(0,R)$, which is stated in Corollary
    3 to Theorem 15.3 on page 159 in \cite{TrevesTopVectorSp}), an 
    $\varepsilon/3$-argument shows that the statement is true for a general 
    $H\in\LSq_w(0,R)$. 
\end{proof}
\begin{theoremS}\label{thm:ONBEF}
    The functions 
    \begin{equation}\label{eq:ONBEF}
        u_{m,n,j}(\bx) = u_{m,n,j}(r\bi) := B_{m,n}(r)Y_{n,j}(\bi)
    \end{equation}
    for $\bx\neq 0$ and $m\in\N$, $n\in\N_0$ and $j\in\{-n,\dots,n\}$ are an orthonormal 
    basis of $\LSq(B)$.
\end{theoremS}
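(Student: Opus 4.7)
The plan is to use the tensor-product structure of $\LSq(B)$ under polar coordinates, reducing the problem to the two already-established orthonormal bases: $(B_{m,n})_{m\in\N}$ on $\LSq_w(0,R)$ (Theorem \ref{thm:ONBRadial}) for every fixed $n$, and the spherical harmonics $(Y_{n,j})_{n\in\N_0,\,|j|\leq n}$ on $\LSq(\S^2)$.

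First I would verify orthonormality, which is essentially bookkeeping. Writing the $\LSq(B)$-inner product in polar coordinates splits it as
\begin{equation*}
    \langle u_{m,n,j},u_{m',n',j'}\rangle_{\LSq(B)}
    = \left(\int_{0}^{R} B_{m,n}(r)B_{m',n'}(r)\,r^2\,\di r\right)
      \left(\int_{\S^2} Y_{n,j}(\bi)Y_{n',j'}(\bi)\,\di\omega(\bi)\right).
\end{equation*}
The second factor equals $\delta_{n,n'}\delta_{j,j'}$ by orthonormality of the spherical harmonics; when $n=n'$ the first factor is $\langle B_{m,n},B_{m',n}\rangle_{\LSq_w(0,R)} = \delta_{m,m'}$ by Theorem \ref{thm:ONBRadial}. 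Note that one should only compare radial factors when the angular indices match, which is exactly when the two families use the \emph{same} $n$.

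For completeness, I would prove Parseval's identity for any $F\in\LSq(B)$, which is equivalent to the claim. By Fubini--Tonelli, for almost every $r\in(0,R)$ the slice $F(r\,\cdot)$ lies in $\LSq(\S^2)$, so it admits an expansion with coefficients $c_{n,j}(r):=\langle F(r\,\cdot),Y_{n,j}\rangle_{\LSq(\S^2)}$ satisfying Parseval on $\S^2$:
\begin{equation*}
    \|F(r\,\cdot)\|_{\LSq(\S^2)}^2 = \sum_{n=0}^{\infty}\sum_{j=-n}^{n}|c_{n,j}(r)|^2.
\end{equation*}
Multiplying by $r^2$, integrating over $(0,R)$, and applying monotone convergence (all terms are non-negative) yields
\begin{equation*}
    \|F\|_{\LSq(B)}^2 = \sum_{n=0}^{\infty}\sum_{j=-n}^{n}\|c_{n,j}\|_{\LSq_w(0,R)}^2,
\end{equation*}
which in particular shows $c_{n,j}\in\LSq_w(0,R)$ for every valid $n,j$. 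Applying Parseval in $\LSq_w(0,R)$ (Theorem \ref{thm:ONBRadial}) to each $c_{n,j}$ with respect to the basis $(B_{m,n})_{m\in\N}$ and observing, via Fubini--Tonelli, that
\begin{equation*}
    \langle c_{n,j},B_{m,n}\rangle_{\LSq_w(0,R)} = \langle F,u_{m,n,j}\rangle_{\LSq(B)},
\end{equation*}
gives the required Parseval identity for $(u_{m,n,j})$ and hence completeness in $\LSq(B)$.

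The main point of care is the measurability/integrability bookkeeping at the start of the completeness argument, namely justifying that $r\mapsto c_{n,j}(r)$ is a well-defined element of $\LSq_w(0,R)$ and that the double sum and the radial integral may be interchanged; both issues are handled cleanly by Tonelli's theorem applied to the non-negative integrand $|c_{n,j}(r)|^2 r^2$. Everything else is a direct consequence of the two one-dimensional Parseval identities already available from Theorem \ref{thm:ONBRadial} and the spherical-harmonic theory quoted earlier.
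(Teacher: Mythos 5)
Your proposal is correct and follows essentially the same route as the paper: both exploit the tensor-product splitting of the $\LSq(B)$ inner product into the radial factor $\LSq_w(0,R)$ (Theorem \ref{thm:ONBRadial}) and the angular factor $\LSq(\S^2)$. The paper merely sketches completeness via the criterion that vanishing of all coefficients forces $V=0$, whereas you verify the equivalent Parseval identity with the Tonelli/Fubini bookkeeping spelled out; this is a fleshed-out version of the same argument, not a different one.
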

\begin{proof}[Proof]
    The $Y_{n,j}$ form a basis of $\LSq(\S^2)$, the $B_{m,n}$ a basis of 
    $\LSq_w(0,R)$ and the inner product of $\LSq(B)$ can be decomposed for functions 
    $u(r\bi)=F(r)Y(\bi)$ and $v(r\bi)=G(r)Z(\bi)$ in the following way:
    \begin{equation*}
        \langle u,v\rangle_{\LSq(B)} = \langle F,G\rangle_{\LSq_w(0,R)}
        \langle Y,Z\rangle_{\LSq(\S^2)}.
    \end{equation*}
    It is then easy to see that the functions $u_{m,n,j}$ are indeed orthonormal in 
    $\LSq(B)$. For completeness, the implication 
    \begin{center}
        $\langle V,u_{m,n,j}\rangle_{\LSq(B)}=0$ 
        for every $m\in\N$, $n\in\N_0$ and $j\in\{-n,\dots,n\}$ $\Rightarrow$ $V=0$ in 
        $\LSq(B)$
    \end{center}
    is proven by exploiting the basis properties of both $B_{m,n}$ and $Y_{n,j}$.
\end{proof}
\begin{lemmaS}
    For any gravitational potential $V$ defined as in Equation \eqref{eq:GravPot} 
    we have 
    \begin{equation}\label{eq:LaplaceSymmPot}
        \langle \Delta V,u_{m,n,j}\rangle_{\LSq(B)} = 
        -\gamma^2_{n,m}\langle V,u_{m,n,j}\rangle_{\LSq(B)}
    \end{equation}
    for every $m\in\N$, $n\in\N_0$ and $j\in\{-n,\dots,n\}$.
\end{lemmaS}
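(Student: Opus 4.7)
The plan is to apply Green's second identity to the pair $(V, u_{m,n,j})$ on $B$ and exploit the eigenvalue property of the basis function. By construction $u_{m,n,j}(r\bi) = B_{m,n}(r) Y_{n,j}(\bi)$, and combining Equation \eqref{eq:SLOpRad} with $\Delta^{\ast} Y_{n,j} = -n(n+1) Y_{n,j}$ yields the pointwise identity $\Delta u_{m,n,j} = -\gamma_{n,m}^2 u_{m,n,j}$ on $B$. Substituting this into Green's identity, the bulk integrals produce exactly $\langle \Delta V, u_{m,n,j}\rangle_{\LSq(B)} + \gamma_{n,m}^2 \langle V, u_{m,n,j}\rangle_{\LSq(B)}$, so the claim reduces to showing that the surface term on $\partial B$ vanishes.

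For the boundary term I would separate radial and angular factors. The radial factors produce $B_{m,n}(R)$ multiplying $\int_{\S^2}\partial_r V(R\bi)\,Y_{n,j}(\bi)\,\di\omega(\bi)$ and $B_{m,n}'(R)$ multiplying $\int_{\S^2} V(R\bi)\,Y_{n,j}(\bi)\,\di\omega(\bi)$. The $\Cont^1(\R^3)$-regularity of $V$ lets me use the exterior expansion \eqref{eq:ExtDPSol} to evaluate $\partial_r V$ at $r = R$ via the one-sided formula \eqref{eq:BoundCondPot}, and the orthonormality of the $Y_{n,j}$ then expresses both angular integrals in terms of the single Fourier coefficient $c := \langle V(R\,\cdot), Y_{n,j}\rangle_{\LSq(\S^2)}$. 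After collecting terms the boundary integral becomes proportional to $B_{m,n}'(R) + \frac{n+1}{R} B_{m,n}(R)$, which is precisely zero by the Robin condition \eqref{eq:BC} imposed on $B_{m,n}$.

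The main subtlety is the regularity of $V$ up to $\partial B$: property III only provides $V \in \Cont^2(B)$, whereas Green's second identity nominally requires $V \in \Cont^2(\overline{B})$. Since $\rho \in \Cont^2(\overline{B})$, classical Hölder estimates for the Newtonian potential solving $\Delta V = 4\pi G\rho$ promote $V$ to $\Cont^2(\overline{B})$, so the identity applies directly. Should one prefer to avoid invoking this, applying Green's identity on the exhausting balls $B_{R-\varepsilon}$ and letting $\varepsilon \to 0^+$ achieves the same result, with the $\Cont^1(\R^3)$-continuity of $V$ and $\nabla V$ ensuring that the interior boundary values match those produced by the exterior expansion \eqref{eq:ExtDPSol}.
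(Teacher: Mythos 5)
Your argument is correct and follows essentially the same route as the paper: Green's second identity applied to the pair $(V,u_{m,n,j})$, the eigenfunction property $\Delta u_{m,n,j}=-\gamma_{n,m}^2u_{m,n,j}$, and cancellation of the surface term on $\partial B$ by combining the boundary relation \eqref{eq:BoundCondPot} for the potential with the Robin condition \eqref{eq:BC} for $B_{m,n}$. The difference is only in where the technical care is placed. The paper does not upgrade $V$ to $\Cont^2(\overline{B})$: it invokes Green's identity in the form valid for functions in $\Cont^2(D)\cap\Cont^1(\overline{D})$ with integrable Laplacians, which $V\in\Cont^1(\R^3)$ together with $\Delta V=4\pi G\rho$ bounded already provides; so your appeal to Hölder estimates for the Newtonian potential, or alternatively the exhaustion by $B_{R-\varepsilon}$, is a legitimate substitute but not strictly needed. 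Conversely, the paper worries about the origin rather than the outer boundary: since $u_{m,n,j}$ is defined via the polar factorization only for $\bx\neq 0$, the paper applies Green's identity on $B\setminus B_{\varepsilon}(0)$ and lets $\varepsilon\to 0$, estimating the extra volume and surface contributions by $K\varepsilon^3$ and $D\varepsilon^2$. Your proof applies the identity on the full ball (which still contains the origin even in the $B_{R-\varepsilon}$ variant); this is fine, but only after observing that $j_n(\gamma_{n,m}|\bx|)\,Y_{n,j}(\bx/|\bx|)$ extends real-analytically across $\bx=0$ as an entire solution of the Helmholtz equation, so that the pointwise identity $\Delta u_{m,n,j}=-\gamma_{n,m}^2u_{m,n,j}$ and the required regularity hold on all of $B$. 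Add that one remark (or excise a small ball around the origin as the paper does) and your proof is complete.
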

\begin{proof}[Proof]
    If $V$ is defined as in Equation \eqref{eq:GravPot}, then it must satisfy 
    the properties I.--III. listed in Chapter \ref{sec:GravPot}.We also need the eigenvalue 
    property of our basis functions, namely that 
    $\Delta u_{m,n,j}(\bx) = -\gamma_{n,m}^2u_{m,n,j}(\bx)$ holds for every 
    $\bx\in B\setminus\{0\}$. We will then have to show that 
    the term
    \begin{equation*}
        \langle \Delta V,u_{m,n,j}\rangle_{\LSq(B)}-\langle V,\Delta u_{m,n,j}
        \rangle_{\LSq(B)}
    \end{equation*}
    vanishes for every valid index. To prove this, Green's second identity 
    seems useful, but our basis functions are in general not continuous at 
    $\bx = 0$, so we have to exclude a small neighborhood of this point 
    and use a limiting process. For this we define $D_{\eps}:=B\setminus B_{\eps}(0)$
    for every $0<\eps<R$ and study the term
    \begin{equation*}
        L_{m,n,j}(\eps) := \int_{D_{\eps}}\Delta V(\bx)u_{m,n,j}(\bx)-
        V(\bx)\Delta u_{m,n,j}(\bx)\di\bx,
    \end{equation*}
    where $\bm{n}$ is the outer unit normal to $\partial D_{\eps}$.
    Now since $V,u_{m,n,j}\in\Cont^2(D_{\eps})\cap\Cont^1(\overline{D_{\eps}})$
    is satisfied, we can use Green's second identity and conclude that 
    \begin{equation*}
        L_{m,n,j}(\eps) = \int_{\partial D_{\eps}}u_{m,n,j}(\bx)\frac{\partial V}
        {\partial\bm{n}}(\bx)-V(\bx)\frac{\partial u_{m,n,j}}{\partial\bm{n}}
        (\bx)\di\textup{S}(\bx)=: R_{m,n,j}(\eps).
    \end{equation*}
    Because all of the occuring integrands are bounded on $B$, we then estimate
    \begin{align*}
        \left|\int_B \Delta V(\bx)u_{m,n,j}(\bx)-V(\bx)\Delta u_{m,n,j}(\bx)\di\textup{S}(\bx)
        -L_{m,n,j}(\eps)\right| &\leq K\eps^3 \\
        \left|\int_{\partial B}u_{m,n,j}(\bx)\frac{\partial V}
        {\partial\bm{n}}(\bx)-V(\bx)\frac{\partial u_{m,n,j}}{\partial\bm{n}}
        (\bx)\di\textup{S}(\bx)-R_{m,n,j}(\eps)\right| &\leq D\eps^2
    \end{align*}
    for some constants $K,D>0$. Letting $\eps\to 0$ we see that 
    \begin{align*}
        \langle \Delta V,u_{m,n,j}\rangle_{\LSq(B)}-\langle V,\Delta u_{m,n,j}
        \rangle_{\LSq(B)} = 
        \int_{\partial B}u_{m,n,j}(\bx)\frac{\partial V}
        {\partial\bm{n}}(\bx)-V(\bx)\frac{\partial u_{m,n,j}}{\partial\bm{n}}
        (\bx)\di\textup{S}(\bx).
    \end{align*}
    To show that the boundary integral vanishes, we substitute
    $\frac{\partial}{\partial \bm{n}}=\partial_r$, $\bx=R\bi$ and 
    $\di\textup{S}(\bx) = R^2\di\omega(\bi)$ as well as 
    $u_{m,n,j}(r\bi) = B_{m,n}(r)Y_{n,j}(\bi)$. Then 
    \begin{align*}
        & \langle \Delta V,u_{m,n,j}\rangle_{\LSq(B)}-\langle V,\Delta u_{m,n,j}
        \rangle_{\LSq(B)}\\
        = & R^2B_{m,n}(R)\langle Y_{n,j},\partial_rV(r\cdot)\rangle_{\LSq(\S^2)}\big|_{r=R}
        -R^2B_{m,n}'(R)\langle Y_{n,j},V(R\cdot)\rangle_{\LSq(\S^2)}.
    \end{align*}
    Finally, because of Equation \eqref{eq:BoundCondPot} we have 
    \begin{equation*}
        \langle Y_{n,j},\partial_rV(r\cdot)\rangle_{\LSq(\S^2)}\big|_{r=R}
        = -\frac{n+1}{R}\langle Y_{n,j},V(R\cdot)\rangle_{\LSq(\S^2)}
    \end{equation*}
    and because of Equation \eqref{eq:BC} the identity $B_{m,n}'(R) = 
    -\frac{n+1}{R}B_{m,n}(R)$ holds. This proves that the boundary integral indeed 
    vanishes and thus concludes the proof.
\end{proof}

%Determining wind-induced density
%--------------------------------------------------------------------------------------
\subsection{Determining the Wind-Induced Potential}\label{sec:DetDynPot}
In this chapter we will deal with the problem of determining $V'$ from the thermo-gravitational
source term $S^u$ in the case of using the Thermo-Gravitational Wind Equation as the model for 
the dynamic problem.
\begin{theoremS}
    The Helmholtz 
    equation
    \begin{equation*}
        \left(\Delta+\left(\frac{\pi}{R}\right)^2\right)V' = S^u+\eta\quad
        \text{on }B
    \end{equation*}
    from Chapter \ref{sec:TGWEq} is equivalent to the set of equations
    \begin{equation}\label{eq:HelmholtzFreq}
        \left(\left(\frac{\pi}{R}\right)^2-\gamma_{n,m}^2\right)
        \langle V',u_{m,n,j}\rangle_{\LSq(B)} = \langle S^u,u_{m,n,j}\rangle_{\LSq(B)}
        +\eta_m\delta_{n,0}\delta_{j,0}
    \end{equation}
    for every $m\in\N$, $n\in\N_0$ and $j\in\{-n,\dots,n\}$. The integration constants $\eta_m\in\R$ are given by $\eta_m := \sqrt{4\pi}\langle\eta,G_{m,0}\rangle_{\LSq_w(0,R)}$ and are therefore undetermined. 
\end{theoremS}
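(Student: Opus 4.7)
The plan is to project the Helmholtz equation onto the orthonormal basis $(u_{m,n,j})$ of $\LSq(B)$ constructed in Theorem \ref{thm:ONBEF}, and then exploit completeness to reverse the direction.

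For the forward direction, I would take the $\LSq(B)$-inner product of both sides of the equation $(\Delta + (\pi/R)^2)V' = S^u + \eta$ against a fixed basis function $u_{m,n,j}$. On the left-hand side I split the result into $\langle \Delta V', u_{m,n,j}\rangle_{\LSq(B)} + (\pi/R)^2\langle V', u_{m,n,j}\rangle_{\LSq(B)}$ and apply Equation \eqref{eq:LaplaceSymmPot} to the first summand. This is the delicate step: I need $V'$ to really be a gravitational potential in the sense of Equation \eqref{eq:GravPot}. This follows from Definition \ref{def:DynQuant}, which gives $V' = V - V_0$ and hence, by linearity of \eqref{eq:GravPot}, identifies $V'$ as the gravitational potential of $\rho' = \rho - \rho_0 \in \Cont^2(\overline{B})$; therefore the properties I.--III. from Section \ref{sec:GravPot} hold for $V'$ and the lemma applies, giving $\langle \Delta V', u_{m,n,j}\rangle_{\LSq(B)} = -\gamma_{n,m}^2 \langle V', u_{m,n,j}\rangle_{\LSq(B)}$. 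Together this produces the factor $((\pi/R)^2-\gamma_{n,m}^2)$ on the left.

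For the right-hand side, the term $\langle S^u, u_{m,n,j}\rangle_{\LSq(B)}$ appears directly, and it remains to evaluate $\langle \eta, u_{m,n,j}\rangle_{\LSq(B)}$. Since $\eta$ depends only on $r$, the inner product factorises as
\begin{equation*}
    \langle \eta, u_{m,n,j}\rangle_{\LSq(B)}
    = \left(\int_0^R \eta(r) B_{m,n}(r)\, r^2\, \di r\right)\left(\int_{\S^2} Y_{n,j}(\bi)\, \di\omega(\bi)\right).
\end{equation*}
Using $Y_{0,0} = 1/\sqrt{4\pi}$ together with the orthonormality of the fully normalised spherical harmonics, the angular integral equals $\sqrt{4\pi}\,\delta_{n,0}\delta_{j,0}$. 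The radial integral at $n=0$ is precisely $\langle \eta, B_{m,0}\rangle_{\LSq_w(0,R)}$, so the whole term collapses to $\eta_m \delta_{n,0}\delta_{j,0}$ with $\eta_m$ as defined in the statement. This establishes \eqref{eq:HelmholtzFreq} for every valid triple $(m,n,j)$.

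For the converse, suppose \eqref{eq:HelmholtzFreq} holds for all $m \in \N$, $n \in \N_0$ and $j \in \{-n,\dots,n\}$. Reading the previous computation backwards, this is the statement that the Fourier coefficients of $(\Delta+(\pi/R)^2)V'$ and of $S^u + \eta$ with respect to the orthonormal basis $(u_{m,n,j})$ from Theorem \ref{thm:ONBEF} agree for every index. By completeness of this basis in $\LSq(B)$, the two functions coincide in $\LSq(B)$, which is the Helmholtz equation. The main obstacle here is purely bookkeeping: I must ensure that $\Delta V' \in \LSq(B)$ (which follows from $\Delta V' = 4\pi G \rho'$ with $\rho' \in \Cont^2(\overline{B})$) and that the application of the Laplacian lemma remains valid, but both have already been secured by Definition \ref{def:DynQuant} and the preceding lemma, so no further work is required.
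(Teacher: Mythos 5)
Your proposal is correct and follows essentially the same route as the paper: project the Helmholtz equation onto the basis $u_{m,n,j}$, use Equation \eqref{eq:LaplaceSymmPot} for the Laplacian term, and observe that the purely radial function $\eta$ only contributes to the $(n,j)=(0,0)$ coefficients with $\langle\eta,u_{m,0,0}\rangle_{\LSq(B)}=\sqrt{4\pi}\langle\eta,B_{m,0}\rangle_{\LSq_w(0,R)}$. Your explicit justification that $V'$ inherits properties I.--III.\ (so the lemma applies) and your completeness argument for the converse direction merely spell out steps the paper leaves implicit.
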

\begin{proof}[Proof]
    We just project the Helmholtz equation onto our basis functions $u_{m,n,j}$ and use 
    Equation \eqref{eq:LaplaceSymmPot}. This yields 
    \begin{equation*}
        \left(\left(\frac{\pi}{R}\right)^2-\gamma_{n,m}^2\right)
        \langle V',u_{m,n,j}\rangle_{\LSq(B)} = \langle S^u,u_{m,n,j}\rangle_{\LSq(B)}
        +\langle \eta,u_{m,n,j}\rangle_{\LSq(B)}    
    \end{equation*}
    for every $m\in\N$, $n\in\N_0$ and $j\in\{-n,\dots,n\}$. Since $\eta$ is a function 
    only of the radius $r$, the basis coefficients of this function vanish for $n\neq 0$
    and $j\neq 0$. A simple calculation shows that 
    $\langle\eta,u_{m,0,0}\rangle_{\LSq(B)}=
    \sqrt{4\pi}\langle\eta,B_{m,0}\rangle_{\LSq_w(0,R)}$.
\end{proof}
Since we assume that $V'$ does not depend on $\varphi$ in spherical coordinates, Equation 
\eqref{eq:HelmholtzFreq} can be restated as
\begin{equation*}
    \left(\left(\frac{\pi}{R}\right)^2-\gamma_{n,m}^2\right)
    \langle V',u_{m,n,0}\rangle_{\LSq(B)} = \langle S^u,u_{m,n,0}\rangle_{\LSq(B)}
    +\eta_m\delta_{n,0}
\end{equation*}
for every $m\in\N$ and $n\in\N_0$. We would like to solve this equation for the coefficients of 
the perturbed potential $V'$, so we need to guarantee that $\gamma_{n,m}^2\neq 
\left(\frac{\pi}{R}\right)^2$ for every value of $n$ and $m$. However the coefficients $\gamma_{n,m}$ solve the equation $j_{n-1}(\gamma_{n,m}R)=0$, so $\gamma_{1,1}=\pi/R$.
This implies that we need another assumption to deal with the coefficient $\langle V',u_{1,1,0}\rangle_{\LSq(B)}$.
\begin{assumptionS}\label{asmp:PolytropeEigenvalue}
    In the following we assume that 
    \begin{equation*}
        \langle V',u_{1,1,0}\rangle_{\LSq(B)} = 0
    \end{equation*}
    holds (a physical justification of this assumption is given in \cite*{ZonalWindsWicht}).
\end{assumptionS}
\begin{remarkS}\label{rem:UniquePot}
    Uniquely determining the potential $V'$ is not always necessary, but if we wanted to 
    do that, we would need another assumption to eliminate the unknown coefficients 
    $\eta_m$. The condition $\langle V',u_{m,0,0}\rangle_{\LSq(B)} = 0$ for every 
    $m\in\N$ or equivalently stated $\int_{\S^2}V'(r\bi)\di\omega(\bi) = 0$ for every 
    $r\in(0,R)$ is sufficient for this. Under these assumptions and remembering the 
    $\varphi$-independece of $V'$ the uniquely determined dynamic potential has the form
    \begin{equation}\label{eq:UniquePot}
        V' = \sum_{\substack{m,n\in\N \\ (m,n)\neq (1,1)}}
        \frac{\langle S^u,u_{m,n,0}\rangle_{\LSq(B)}}{\left(\frac{\pi}{R}\right)^2-
        \gamma_{n,m}^2}u_{m,n,0}\quad\text{in }\LSq(B).
    \end{equation}
\end{remarkS}
It is then also relatively easy to see that 
\begin{equation*}
    u_{m,n,0}(r\bi)= \sqrt{\frac{2n+1}{2\pi R^3}}\frac{j_n(\gamma_{n,m}r)}
    {|j_n(\gamma_{n,m}R)|}P_n(\xi_3)
\end{equation*}
holds for every $n\in\N_0$ and $m\in\N$.
\newpage

%Connection between J_n and wind
%--------------------------------------------------------------------------------------
\section{Connecting the \texorpdfstring{$J_n$}{Jn} to the Wind Field}
\label{sec:JnConnect}
In this chapter, we will see how to formulate the relationship between the wind field and 
the gravitational data as an inverse problem for some parameters of the wind field.

%Subsection: A closer look at the J_n
%--------------------------------------------------------------------------------------
\subsection{A Closer Look at the \texorpdfstring{$J_n$}{Jn}}
\label{sec:JNOptim}
We will first recapitulate the definition of the gravitational coefficients
$J_n$, which first occurred in Equation \eqref{eq:PotExpan}.
\begin{remarkS}\label{rem:GravCoeff}
    For a mass density $\rho\in\Cont^2(\overline{B})$ satisfying Assumption
    \ref*{asmp:PhiIndependence} the (tesseral) gravitational 
    coefficients caused by $\rho$ are 
    \begin{align*}
        J_n = -\frac{1}{MR^n}\int_{B}|\bx|^nP_n\left(\frac{x_3}{|\bx|}\right)
        \rho(\bx)\di\bx
    \end{align*}
    for every $n\in\N$ with $n\geq 2$.
    These coefficients can be derived from the measured gravitational data of an orbiter like Juno and are therefore available for further processing.
\end{remarkS}
Since these coefficients first occurred, we have split up the density $\rho$ into 
two terms $\rho_0$ and $\rho'$, corresponding to a perturbation approach. Because of linearity,
we can split up the $J_n$ in the same way.
\begin{definitionS}\label{def:StatDynGravCoeffs}
    For the mass density $\rho_0\in\Cont^2(\overline{B})$ from
    Definition \ref*{def:StatQuant} we define the 
    \textbf{hydrostatic background gravitational coefficients} as 
    \begin{align*}
        J^0_n:=-\frac{1}{MR^n}\int_{B}|\bx|^nP_n\left(\frac{x_3}{|\bx|}\right)
        \rho_0(\bx)\di\bx.
    \end{align*}
    For the dynamic or wind-induced density $\rho'\in\Cont^2(\overline{B})$ from
    Definition \ref*{def:DynQuant} we define the 
    \textbf{dynamic gravitational coefficients} as
    \begin{align*}
        \delta J_n:=-\frac{1}{MR^n}\int_{B}|\bx|^nP_n\left(\frac{x_3}{|\bx|}
        \right)\rho'(\bx)\di\bx.
    \end{align*}
    Both sets of coefficients are defined for $n\in\N$ with $n\geq 2$.
\end{definitionS} 
\begin{remarkS}
    If $G'(\bx) = \rho'(\bx)+\eta(|\bx|)$ represents a radial indeterminacy of the 
    density, then the coefficients $\delta J_n$ remain unchanged, because 
    \begin{align*}
        & \int_B|\bx|^n\eta(|\bx|)P_n\left(\frac{x_3}{|\bx|}\right)\di\bx \\
        = & \int_{0}^{R}r^{n+2}\eta(r)\di r\int_{\S^2}P_n(\xi_3)\di\omega(\bi) \\
        = & 2\pi\int_{0}^{R}r^{n+2}\eta(r)\di r\int_{-1}^{1}P_n(t)\di t = 0
    \end{align*}
    for every $n\geq 2$ by orthogonality of the $P_n$ in $\LSq[-1,1]$ ($P_0(t)\equiv 1$).
\end{remarkS}
It is worth noting that the $\delta J_n$ are the gravity moments caused by $\rho'$. If 
we use the Thermal Wind Equation as the basis of our dynamic model, then this presents
no problem since we can determine $\rho'$ in terms of $\bm{u}$ by Equation 
\eqref{eq:PoissonKaspi}. However if we use the Thermo-Gravitational Wind Equation we only 
determine $V'$ in dependence of $\bm{u}$, not $\rho'$. Thus we need a mathematical 
connection between $V'$ and the $\delta J_n$. We present a relevant method of calculation now
by first recapitulating the properties of $V'$ in the following remark.
\begin{remarkS}\label{rem:ProperiesWindPot}
    The wind-induced gravitational potential $V'$ satisfies 
    $V'\in\Cont^2(B)\cap\Cont^2(\R^3\setm\overline{B})\cap\Cont^1(\R^3)$, is regular at
    infinity and 
    \begin{align*}
        \Delta V'(\bx) = 0\quad\text{for every }\bx\in\R^3\setm\overline{B}.
    \end{align*}
    On $B$, the potential satisfies
    \begin{align*}
        \Delta V' = 4\pi G\rho',
    \end{align*}
    where $\rho'\in\Cont^2(\overline{B})$ is the wind-induced density.
\end{remarkS}
\begin{lemmaS}\label{lma:GravCoeffPot}
    If $V'$ satisfies the properties in Remark \ref*{rem:ProperiesWindPot}, then
    we can express the coefficients $\delta J_n$ via 
    \begin{align*}
        \delta J_n = \frac{2n+1}{4\pi}\,\frac{R}{GM}\int_{\S^2}V'(R\bi)P_n(\xi_3)
        \di\omega(\bi)
    \end{align*}
    for every $n\geq 2$.
\end{lemmaS}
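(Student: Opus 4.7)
The plan is to apply the exterior spherical harmonic expansion of $V'$ in direct analogy to the derivation of Equation \eqref{eq:PotExpan}, and then extract the coefficient for each $n\geq 2$ by Legendre orthogonality on $\partial B$.

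First I would observe that, by Definition \ref*{def:DynQuant} and the linearity of the Newton integral in Equation \eqref{eq:GravPot}, $V'$ is the gravitational potential of $\rho'=\rho-\rho_0$, so it satisfies the analogues of properties I.--III.\ from Section \ref*{sec:GravPot} with $\rho$ replaced by $\rho'$, as already summarized in Remark \ref*{rem:ProperiesWindPot}. Since $\rho$ obeys Assumption \ref*{asmp:PhiIndependence} and $\rho_0$ is radially symmetric by Assumption \ref*{asmp:RadIndependence}, the perturbation $\rho'$ inherits axisymmetry, so the spherical harmonic expansion of $V'$ reduces to the $Y_{n,0}$-terms (equivalently $P_n$ applied to the third coordinate of the direction).

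Next I would reproduce the calculation that led from the expansion of $1/|\bx-\by|$ to Equation \eqref{eq:PotExpan}, but with $\rho$ replaced by $\rho'$. For $r>R$ this yields
\begin{equation*}
    V'(r\bi) = -\frac{G}{r}\sum_{n=0}^{\infty}\left(\frac{R}{r}\right)^n P_n(\bi\cdot\bm{e}^3)\cdot \frac{1}{R^n}\int_B|\by|^n P_n\!\left(\frac{y_3}{|\by|}\right)\rho'(\by)\,\di\by.
\end{equation*}
For $n\geq 2$ the inner integral equals $-MR^n\,\delta J_n$ by Definition \ref*{def:StatDynGravCoeffs}, so the coefficient of $R^n P_n(\bi\cdot\bm{e}^3)/r^{\,n+1}$ is exactly $GM\,\delta J_n$. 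The $n=0$ and $n=1$ terms do not collapse as they did for $V$ (the integrated mass and the center of mass of $\rho'$ are not prescribed to vanish), but they are harmless for the claim since only $n\geq 2$ enters.

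Finally I would pass to the boundary $r=R$. Because $V'\in\Cont^1(\R^3)$ by Remark \ref*{rem:ProperiesWindPot}, the trace $V'(R\,\cdot)$ lies in $\LSq(\S^2)$ and the exterior series converges to that trace in $\LSq(\S^2)$, exactly as in the reasoning that produced Equation \eqref{eq:BoundCondPot}. Taking the $\LSq(\S^2)$-inner product with $P_n(\bi\cdot\bm{e}^3)$ for $n\geq 2$ and invoking the well-known orthogonality
\begin{equation*}
    \int_{\S^2}P_n(\xi_3)P_m(\xi_3)\,\di\omega(\bi) = \frac{4\pi}{2n+1}\,\delta_{nm}
\end{equation*}
isolates the $n$-th coefficient; solving for $\delta J_n$ gives the claimed formula. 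The one genuinely non-routine step is the boundary passage, namely justifying that the series valid on $\R^3\setm\overline{B}$ may be evaluated on $\partial B$ against $P_n$; this is handled by combining continuity of $V'$ across $\partial B$ with uniform convergence of the expansion on compacta of $\R^3\setm B$ (equivalently, by reading off the boundary expansion as in Equation \eqref{eq:BoundCondPot}).
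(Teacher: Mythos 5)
Your argument is correct, but it takes a genuinely different route from the paper. You derive the exterior multipole expansion of $V'$ as the Newton potential of $\rho'$ (mirroring the derivation of Equation \eqref{eq:PotExpan}), identify the coefficient of $r^{-n-1}P_n(\xi_3)$ with $GM\,\delta J_n$ via Definition \ref{def:StatDynGravCoeffs}, and then extract $\delta J_n$ from the boundary trace by Legendre orthogonality. The paper instead works entirely inside $B$: it introduces the interior harmonic functions $H_n^{\intTop}(\bx)=|\bx|^nP_n(x_3/|\bx|)$, writes $-4\pi GMR^n\delta J_n=\int_B H_n^{\intTop}\Delta V'-V'\Delta H_n^{\intTop}\,\di\bx$ using $\Delta V'=4\pi G\rho'$, converts this to a boundary integral by Green's second identity, and then eliminates $\partial_r V'$ through the boundary relation \eqref{eq:BoundCondPot}. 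The paper's route needs only the PDE properties listed in Remark \ref{rem:ProperiesWindPot} and sidesteps any question of convergence of an exterior series at $r=R$; your route is more elementary in that it reuses the multipole machinery already set up for \eqref{eq:PotExpan} and makes the identification of $\delta J_n$ with exterior harmonic coefficients transparent, but it does rely on the boundary identification of the series. On that last point, your parenthetical appeal to Equation \eqref{eq:BoundCondPot} is the right justification (the paper asserts the $\LSq(\S^2)$-sense trace expansion immediately after \eqref{eq:ExtDPSol}); the alternative phrasing ``uniform convergence on compacta of $\R^3\setm B$'' is slightly too strong as stated, and the cleanest fix is simply to compare spherical harmonic coefficients of your expansion with those in \eqref{eq:ExtDPSol} at a fixed $r>R$, which yields $\langle V'(R\,\cdot),Y_{n,0}\rangle_{\LSq(\S^2)}=\sqrt{4\pi/(2n+1)}\,GM\,\delta J_n/R$ for $n\geq 2$ without ever passing to the boundary. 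Also note that your appeal to axisymmetry of $\rho'$ is harmless but unnecessary: projecting onto $Y_{n,0}$ annihilates any $j\neq 0$ contributions by orthogonality.
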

\begin{proof}[Proof]
    For $n\geq 2$ we define the functions $H_n^{\intTop}(\bx):=|\bx|^nP_n(x_3/|\bx|)$ for 
    $\bx\neq 0$ and $H_n^{\intTop}(0):=0$, these functions are harmonic on $B$. Then if we 
    use the fact that $\Delta V' = 4\pi G\rho'$ on $B$ we get
    \begin{align*}
        -4\pi GMR^n\delta J_n &= \int_BH_n^{\intTop}(\bx)\Delta V'(\bx)\di\bx
        -\int_BV'(\bx)\Delta H_n^{\intTop}(\bx)\di\bx \\
        &= \int_{\partial B}H_n^{\intTop}(\bx)\frac{\partial V'}{\partial\bm{n}}(\bx)
        -V'(\bx)\frac{\partial H_n^{\intTop}}{\partial\bm{n}}(\bx)\di\textup{S}(\bx).
    \end{align*}
    We used Green's second identity, since both $H_n^{\intTop}$ and $V'$ are contained in 
    $\Cont^2(B)\cap\Cont^1(\overline{B})$, in the latter equation. Since 
    $\frac{\partial}{\partial\bm{n}} = \partial_r$ and $Y_{n,0}(\bi)=
    \sqrt{\frac{2n+1}{4\pi}}P_n(\xi_3)$, a bit of calculation shows that the boundary 
    integrals are equal to
    \begin{equation*}
        \sqrt{\frac{4\pi}{2n+1}}\left(R^{n+2}\langle \partial_rV'(r\,\cdot),
        Y_{n,0}\rangle_{\LSq(\S^2)}\big|_{r=R}-nR^{n+1}\langle V'(R\,\cdot),Y_{n,0}
        \rangle_{\LSq(\S^2)}\right).
    \end{equation*}
    Since $V'$ satisfies Equation \eqref{eq:BoundCondPot}, we know that 
    \begin{equation*}
        \langle \partial_rV'(r\,\cdot),Y_{n,0}\rangle_{\LSq(\S^2)}\big|_{r=R} = 
        -\frac{n+1}{R}\langle V'(R\,\cdot),Y_{n,0}\rangle_{\LSq(\S^2)}.
    \end{equation*}
    This allows us to conclude that
    \begin{equation*}
        \delta J_n = \frac{2n+1}{4\pi}\,\frac{R}{GM}\int_{\S^2}P_n(\xi_3)V'(R\bi)
        \di\omega(\bi).\qedhere
    \end{equation*}
\end{proof}
Having proven this lemma, we can relate the $\delta J_n$ to the coefficients
of $V'$ in the basis $u_{m,n,j}$ we constructed in Section \ref*{sec:ONBGravPot}.
\begin{lemmaS}\label{lma:DeltaJNExpansion}
    If $V'$ satisfies the properties in Remark \ref*{rem:ProperiesWindPot}, then we can 
    calculate the $\delta J_n$ via 
    \begin{equation*}
        \delta J_n = \sqrt{\frac{2n+1}{4\pi}}\,\frac{R}{GM}\sum_{m=1}^{\infty}
        \langle V',u_{m,n,0}\rangle_{\LSq(B)}B_{m,n}(R)
    \end{equation*}
    for every $n\geq 2$.
\end{lemmaS}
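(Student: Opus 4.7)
The plan is to combine Lemma \ref{lma:GravCoeffPot} with the orthonormal expansion of $V'$ in the basis from Theorem \ref{thm:ONBEF}. Rewriting Lemma \ref{lma:GravCoeffPot} via the identity $P_n(\xi_3)=\sqrt{4\pi/(2n+1)}\,Y_{n,0}$ gives
\begin{equation*}
    \delta J_n = \sqrt{\frac{2n+1}{4\pi}}\,\frac{R}{GM}\,F_{n,0}(R),\qquad F_{n,0}(r) := \langle V'(r\,\cdot),Y_{n,0}\rangle_{\LSq(\S^2)},
\end{equation*}
so the claim reduces to $F_{n,0}(R) = \sum_{m=1}^{\infty}\langle V',u_{m,n,0}\rangle_{\LSq(B)}B_{m,n}(R)$. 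By Fubini and the product structure $u_{m,n,0}(r\bi)=B_{m,n}(r)Y_{n,0}(\bi)$, the basis coefficient equals $\langle F_{n,0},B_{m,n}\rangle_{\LSq_w(0,R)}$, so Theorem \ref{thm:ONBRadial} yields convergence of the partial sums $G_M := \sum_{m=1}^{M}\langle V',u_{m,n,0}\rangle_{\LSq(B)}B_{m,n}$ to $F_{n,0}$ in $\LSq_w(0,R)$.

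The main obstacle is upgrading this $\LSq_w$-convergence to convergence at the single point $r=R$, which an $\LSq$-limit does not provide in general. My approach is to prove absolute and uniform convergence of the series on a one-sided neighborhood $[R-\varepsilon,R]$. For the coefficient decay, Equation \eqref{eq:LaplaceSymmPot} and the Cauchy--Schwarz inequality give
\begin{equation*}
    |\langle V',u_{m,n,0}\rangle_{\LSq(B)}|\leq \gamma_{n,m}^{-2}\|\Delta V'\|_{\LSq(B)};
\end{equation*}
since the $\gamma_{n,m}$, being positive zeros of $j_{n-1}$, grow linearly in $m$, the coefficients decay at least like $m^{-2}$. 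For the basis functions, the explicit formula in Lemma \ref{lma:FormulaEF} shows that $|B_{m,n}(R)|=\sqrt{2/R^3}$ is independent of $m$, and more generally the large-argument asymptotics of $j_n$ (together with the matching asymptotics of $|j_n(\gamma_{n,m}R)|$) provide a uniform bound $|B_{m,n}(r)|\leq C_{\varepsilon}$ for $r\in[R-\varepsilon,R]$. The Weierstrass M-test then delivers absolute and uniform convergence of the series on $[R-\varepsilon,R]$.

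Finally, since $V'\in\Cont^1(\R^3)$, the function $F_{n,0}$ is continuous on $[0,R]$, and the uniform limit of the partial sums is continuous on $[R-\varepsilon,R]$. Because the $\LSq_w$-convergence already identifies $F_{n,0}$ as the limit almost everywhere, the two continuous functions must coincide on $[R-\varepsilon,R]$; evaluating at $r=R$ then yields $F_{n,0}(R) = \sum_{m=1}^{\infty}\langle V',u_{m,n,0}\rangle_{\LSq(B)}B_{m,n}(R)$ and closes the proof.
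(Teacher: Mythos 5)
Your proposal is correct, but it reaches the endpoint evaluation by a genuinely different route than the paper. The paper also reduces to showing that the radial profile $f_n(r)=\int_{\S^2}V'(r\bi)P_n(\xi_3)\di\omega(\bi)$ is recovered at $r=R$ by its expansion in the $B_{m,n}$, but it does so by rewriting the partial sums as a Dini--Bessel series for $s\mapsto\sqrt{s}f_n(Rs)$ and invoking Watson's endpoint convergence theorem (18.34 in \cite{BesselWatson}) for functions of bounded variation --- the same machinery as in the proof of Theorem \ref{thm:ONBRadial}. You instead exploit the potential-theoretic structure of $V'$: since $V'$ is the Newtonian potential of $\rho'$ and satisfies the boundary relation \eqref{eq:BoundCondPot}, the identity \eqref{eq:LaplaceSymmPot} applies (as the paper itself uses in Section \ref{sec:DetDynPot}), and with Cauchy--Schwarz it gives $|\langle V',u_{m,n,0}\rangle_{\LSq(B)}|\leq\gamma_{n,m}^{-2}\|\Delta V'\|_{\LSq(B)}=\mathcal{O}(m^{-2})$; combined with $|B_{m,n}(R)|=\sqrt{2/R^3}$ and the uniform bound on $|B_{m,n}|$ near $r=R$ (which rests on the lower bound for $|j_n(\lambda_{n,m})|$ already established in the proof of Theorem \ref{thm:DecayFourierCoeffBessel}), the M-test gives uniform convergence on $[R-\varepsilon,R]$, and the continuity argument identifies the limit with $f_n$ there, in particular at $r=R$. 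What each approach buys: yours avoids the deep classical Dini-series endpoint theorem and is essentially self-contained given the lemmas already proven, but it only works because $V'$ is a potential of an $\LSq$ density (so the $\gamma_{n,m}^{-2}$ decay is available); the paper's argument makes no use of that structure and would apply to any $\Cont^1$ radial profile, at the price of citing Watson. Your only glossed step --- the uniform bound $|B_{m,n}(r)|\leq C_\varepsilon$ on $[R-\varepsilon,R]$ --- is legitimate but should be spelled out via Lemmas \ref{lma:AsymptoticBesselFunc} and \ref{lma:AsymptoticBesselZero}.
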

\begin{proof}[Proof]
    For $n\in\N_0$ with $n\geq 2$ we define the function $f_n$ by 
    \begin{align*}
        f_n:[0,R]\to \R,\,r\mapsto f_n(r):=\int_{\S^2}V'(r\bi)P_n(\xi_3)
        \di\omega(\bi),
    \end{align*}
    then since $V'\in\Cont^1(\overline{B})$, this function is in $\Cont^1[0,R]\subs 
    \LSq_w(0,R)$. By Lemma \ref*{lma:GravCoeffPot} we have $\delta J_n = 
    \frac{2n+1}{4\pi}\,\frac{R}{GM}f_n(R)$. The $M$-th partial Fourier sum of $f_n$ in 
    terms of the basis $(B_{m,n})_{m\in\N}$ is 
    \begin{equation*}
        F_{M,n}(r) := \sum_{m=1}^{M}\langle f_n,B_{m,n}\rangle_{\LSq_w(0,R)}B_{m,n}(r).
    \end{equation*}
    By exactly the same transformations as in the proof of Theorem \ref*{thm:ONBRadial}, 
    we see that 
    \begin{align*}
        \sqrt{s}F_{M,n}(Rs) = \sum_{m=1}^{M}\frac{2}{J_{n+1/2}^2(\lambda_{n,m})}c_{m,n} 
        J_{n+1/2}(\lambda_{n,m}s),
    \end{align*}
    where $s:=r/R$, $\lambda_{n,m}:=\gamma_{n,m}R$ and 
    \begin{align*}
        c_{n,m} := \int_{0}^{1}s(\sqrt{s}f_n(Rs))J_{n+1/2}(\lambda_{n,m}s)\di s
    \end{align*}
    for every $m\in\N$. This sum has the form of a Dini-Bessel series as in the proof of 
    Theorem \ref*{thm:SeriesContBessel}. This means that 18.34 in \cite*{BesselWatson} 
    allows the conclusion that $F_{M,n}(R)\to f_n(R)$ for $M\to\infty$, since the 
    function $s\mapsto \sqrt{s}f_n(Rs)$ is continuously differentiable and thus of 
    bounded variation in intervals of the form $[\varepsilon,1]$ for every $\varepsilon$ 
    between $0$ and $1$. Using $Y_{n,0}(\bi)=\sqrt{\frac{2n+1}{4\pi}}P_n(\xi_3)$ we see
    that $\langle f_n,B_{m,n}\rangle_{\LSq_w(0,R)} = \sqrt{\frac{4\pi}{2n+1}}\langle V',
    u_{m,n,0}\rangle_{\LSq(B)}$. This implies that
    \begin{equation*}
        \delta J_n = \frac{2n+1}{4\pi}\,\frac{R}{GM}f_n(R) = 
        \sqrt{\frac{2n+1}{4\pi}}\,\frac{R}{GM}\sum_{m=1}^{\infty}\langle V',u_{m,n,0}
        \rangle_{\LSq(B)}B_{m,n}(R).\qedhere
    \end{equation*}
\end{proof}
\begin{theoremS}\label{thm:DynJnCalc}
    If $V'$ satisfies the properties in Remark \ref*{rem:ProperiesWindPot}, then the 
    $\delta J_n$ have the form
    \begin{equation*}
        \delta J_n = \sqrt{\frac{2n+1}{4\pi}}\,\frac{R}{GM}\sum_{m=1}^{\infty}
        \frac{\langle S^u,u_{m,n,0}\rangle_{\LSq(B)}}{\left(\frac{\pi}{R}\right)^2-
        \gamma_{n,m}^2}B_{m,n}(R)
    \end{equation*}
    for every $n\geq 2$.
\end{theoremS}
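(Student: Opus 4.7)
The plan is to combine the two preparatory results immediately at hand: Lemma \ref{lma:DeltaJNExpansion} expresses $\delta J_n$ as a series in the basis coefficients $\langle V',u_{m,n,0}\rangle_{\LSq(B)}$, while Equation \eqref{eq:HelmholtzFreq} relates those coefficients to $\langle S^u,u_{m,n,0}\rangle_{\LSq(B)}$. So the entire argument should be a substitution, once the denominators $(\pi/R)^2-\gamma_{n,m}^2$ are known to be nonzero.

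First I would fix $n\geq 2$ and $j=0$ in Equation \eqref{eq:HelmholtzFreq}. Since $n\geq 2$, the Kronecker factor $\delta_{n,0}$ vanishes, so
\begin{equation*}
    \left(\left(\frac{\pi}{R}\right)^2-\gamma_{n,m}^2\right)
    \langle V',u_{m,n,0}\rangle_{\LSq(B)} = \langle S^u,u_{m,n,0}\rangle_{\LSq(B)}
\end{equation*}
for every $m\in\N$. Provided the prefactor does not vanish, I can divide and obtain an explicit formula for $\langle V',u_{m,n,0}\rangle_{\LSq(B)}$. Substituting this into the expansion from Lemma \ref{lma:DeltaJNExpansion} yields the claimed series for $\delta J_n$; convergence of the resulting series is inherited from the convergence already established in that lemma.

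The only nontrivial point is the non-resonance condition $\gamma_{n,m}\neq \pi/R$ for all $m\in\N$ and all $n\geq 2$. By Lemma \ref{lma:FormulaEF}, $\gamma_{n,m}R$ is a positive zero of $j_{n-1}$, so this amounts to showing that $\pi$ is not a zero of $j_{n-1}$ for any $n\geq 2$, i.e., of $j_k$ for $k\geq 1$. The case $k=0$ is the only coincidence, because $j_0(\pi)=\sin(\pi)/\pi=0$ is exactly what forces $\gamma_{1,1}=\pi/R$ and motivates Assumption \ref{asmp:PolytropeEigenvalue}; for $k\geq 1$ one can verify directly using the closed forms of $j_k$ (e.g.\ $j_1(\pi)=1/\pi\neq 0$) and more generally invoke the interlacing of zeros of consecutive spherical Bessel functions to exclude $\pi$ from the zero set of $j_k$ for $k\geq 1$. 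I expect this Bessel-zero verification to be the only real obstacle; the rest is bookkeeping.

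Once the non-resonance is in hand, the chain
\begin{equation*}
    \delta J_n
    = \sqrt{\frac{2n+1}{4\pi}}\,\frac{R}{GM}\sum_{m=1}^{\infty}
    \langle V',u_{m,n,0}\rangle_{\LSq(B)}B_{m,n}(R)
    = \sqrt{\frac{2n+1}{4\pi}}\,\frac{R}{GM}\sum_{m=1}^{\infty}
    \frac{\langle S^u,u_{m,n,0}\rangle_{\LSq(B)}}{\left(\frac{\pi}{R}\right)^2-\gamma_{n,m}^2}B_{m,n}(R)
\end{equation*}
follows immediately, completing the proof.
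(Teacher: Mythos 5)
Your proposal is correct and follows essentially the same route as the paper: substitute the coefficient relation \eqref{eq:HelmholtzFreq} (whose $\eta_m\delta_{n,0}\delta_{j,0}$ term drops out since $n\geq 2$) into the series of Lemma \ref{lma:DeltaJNExpansion}. The only difference is that you explicitly verify the non-resonance $\gamma_{n,m}\neq\pi/R$ for $n\geq 2$ via the zeros of $j_{n-1}$ (e.g.\ $j_1(\pi)=1/\pi\neq 0$ and monotonicity/interlacing of Bessel zeros), a point the paper leaves implicit by referring back to Section \ref{sec:DetDynPot} and Assumption \ref{asmp:PolytropeEigenvalue}, so this is a welcome added detail rather than a deviation.
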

\begin{proof}[Proof]
    We just apply the results of Section \ref{sec:DetDynPot} to 
    Lemma \ref{lma:DeltaJNExpansion}. Note that we just need Assumption 
    \ref{asmp:PolytropeEigenvalue}, the last summand from Equation 
    \eqref{eq:HelmholtzFreq} (the one containing $\eta_m$) vanishes because $n\geq 2$.   
\end{proof}

%An Inverse Problem
%--------------------------------------------------------------------------------------
\subsection{An Inverse Problem}\label{sec:InvProb}
The gravity harmonics provided by spacecraft measurements are not sufficient to constrain the depth structure of the zonal winds. 
The inversion would indeed be highly non-unique and additional information and assumptions on the flow are required.
Numerical simulations and theoretical consideration show that the flow tends to be geostrophic. This means that there is practically no variation in the direction of the rotation axis, $\bm{e}^3$. 
This is a result of the strong Coriolis force, resulting in a primary balance between the Coriolis force and the non-hydrostrophic pressure gradient. 
We first specify the structure of the zonal wind field $u_{\varphi}$ according to these assumptions, this structure is also assumed in \cite*{CylindWindsKaspi,AtmosWindsKaspi} but presents some problems from a mathematical point of view (see Remark \ref*{rem:ProblemCylindWind}).
These models use the surface wind field as a function of $t=\cos(\vartheta)$ (in spherical coordinates) $u^{\surf}:[-1,1]\to\R,\quad t\mapsto u^{\surf}(t)$, which is known from direct observation (for Saturn and Jupiter see \cite*{SurfaceWindSaturn} and \cite*{SurfaceWindJupiter} respectively). 
In the following definition we specify the wind structure that is also assumed in \cite{CylindWindsKaspi,AtmosWindsKaspi}.
\begin{figure}[!ht]
    \centering
    \begin{subfigure}[ht]{0.45\textwidth}
        \centering
        \includegraphics[width=\textwidth]{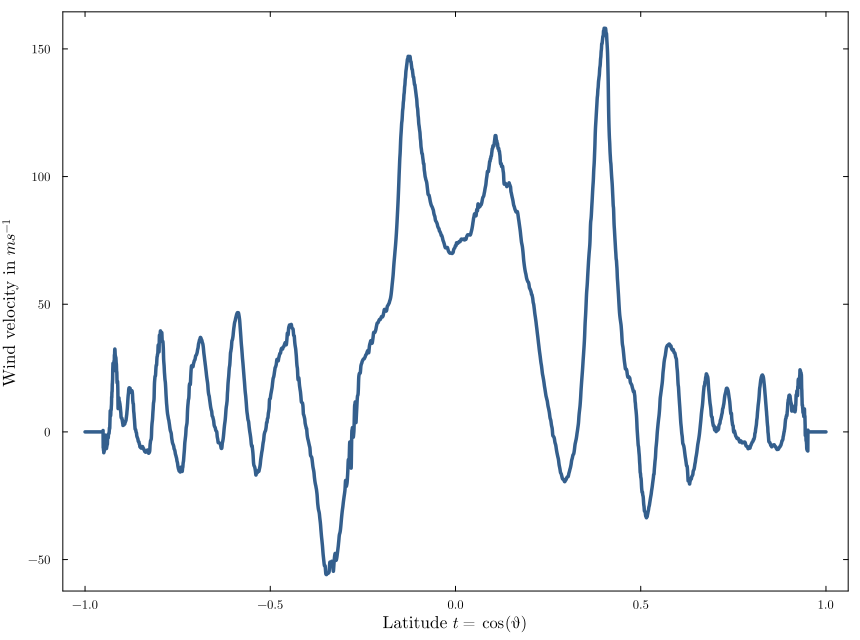}
        \caption*{Surface wind on Jupiter (data from \cite{SurfaceWindJupiter})}
    \end{subfigure}
    \hfill
    \begin{subfigure}[ht]{0.45\textwidth}
        \centering
        \includegraphics[width=\textwidth]{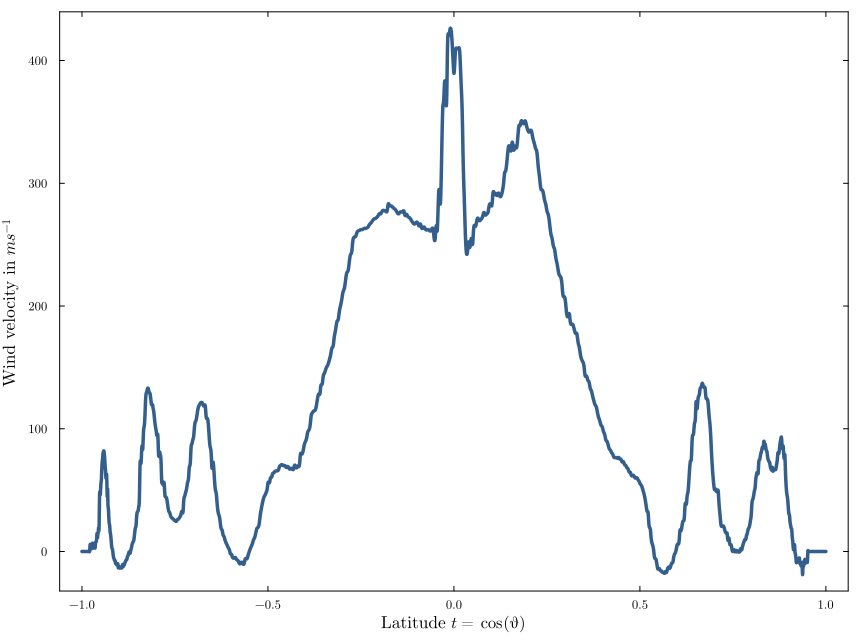}
        \caption*{Surface wind on Saturn (data from \cite{SurfaceWindSaturn})}
    \end{subfigure}
    \caption{The surface wind on Jupiter and Saturn as 
    functions of the latitude $t=\cos(\vartheta)$}\label{fig:SurfacWinds}
\end{figure}
\begin{definitionS}\label{def:ProjWindCylind}
    We define the \textbf{cylindrically projected wind} as the function
    \begin{equation*}
        u^{\proj}_{\textup{c}}:B_R(0)\setm \{0\}\to\R,\quad r\bi\mapsto u^{\surf}
        \left(\sgn(\xi_3)\sqrt{1-\frac{r^2}{R^2}(1-\xi_3^2)}\right).
    \end{equation*}
    See also Figure \ref{fig:UProjExplained}.
\end{definitionS}
\begin{figure}[!ht]
    \centering
    \includegraphics[width=0.5\textwidth]{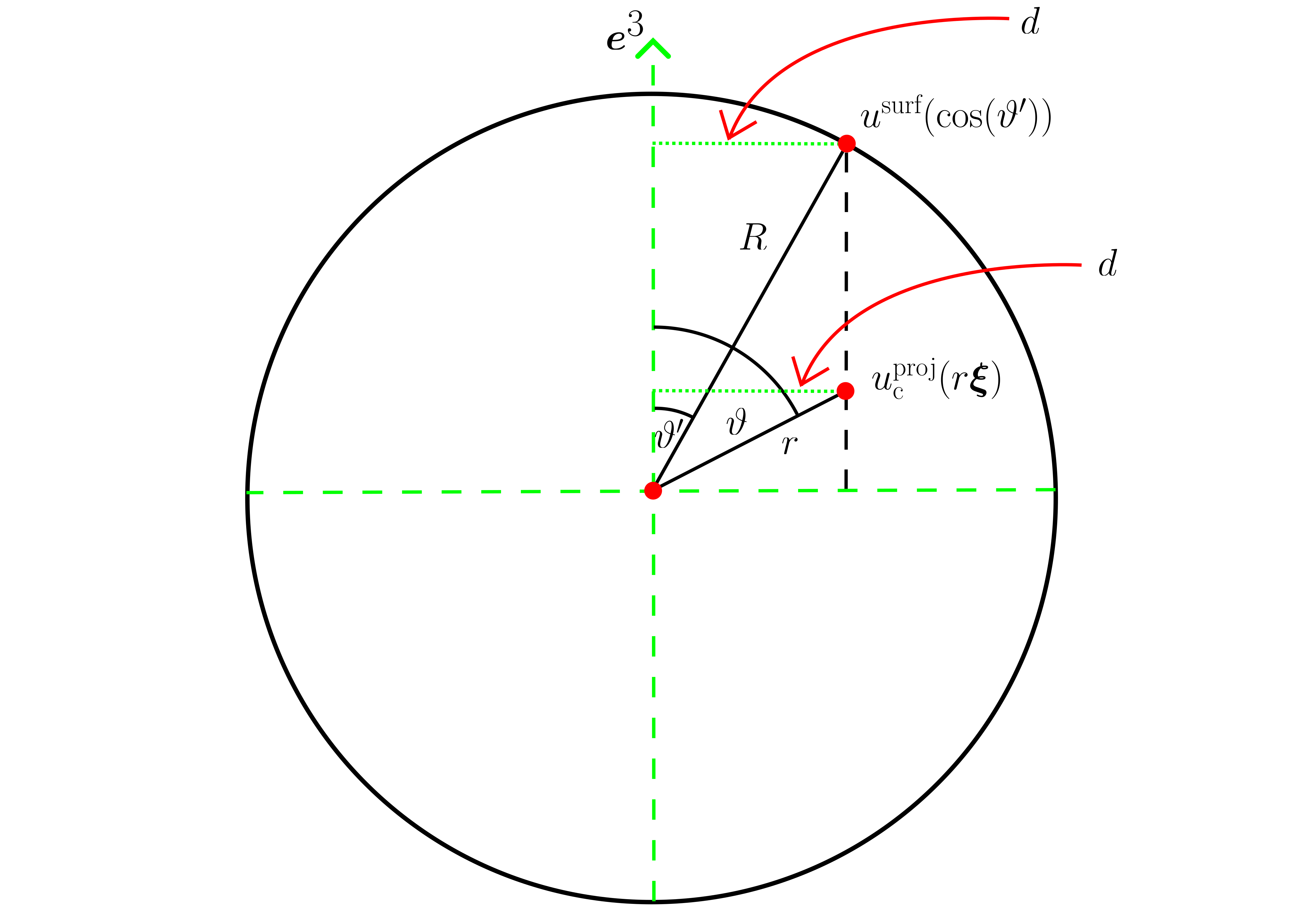}
    \caption{How to calculate $u^{\proj}_{\textup{c}}$ from $u^{\surf}$}
    \label{fig:UProjExplained}
\end{figure}
The value of $u^{\surf}$ that defines $u^{\proj}_c(r\bi)$ is the value that the surface 
wind takes on the point closest to $r\bi$ on the planetary surface that is also on the 
line parallel to the axis of rotation of the planet and through $r\bi$.
In Figure \ref{fig:UProjExplained} this means that $u^{\proj}_c(r\bi(\vartheta,\varphi))$ is calculated via $u^{\proj}_c(r\bi(\vartheta,\varphi)) = u^{\surf}(\cos(\vartheta'))$, where $u^{\surf}$ is the measured surface wind from above.
The angle $\vartheta'$ is defined by the equation $d = r\sin(\vartheta) = R\sin(\vartheta')$, meaning that $\cos(\vartheta') = \sgn(\xi_3)\sqrt{1-\frac{r^2}{R^2}(1-\xi_3^2)}$.
\begin{remarkS}\label{rem:ProblemCylindWind}
    The definition of the cylindrically projected wind presents a mathematical problem,
    namely a discontinuity at the equatorial plane. This arises because the surface wind 
    field is not symmetrical with respect to the equator. Thus projecting the wind field 
    inward parallel to the axis of rotation leads to different values of 
    $u^{\proj}_{\textup{c}}$ when approaching the equatorial plane from the North and the 
    South. This fact contradicts our assumption that $\bm{u}\in 
    \Cont^2(\overline{B},\R^3)$, which we made at the beginning of Chapter 
    \ref*{sec:WindInsidePlanet}. Since the formulas we derived for calculating 
    $\rho'$ (or equivalently $V'$) all include a term which includes a spatial derivative of
    $\bm{u}$ or $\beps^{\varphi}\cdot\bm{u}=u_{\varphi}$, see Equation 
    \eqref{eq:TWSimplestSolStep1} and Definition \ref*{def:SourceTGW}, this is 
    especially problematic. These facts have also been discussed in the literature 
    \cite*{ZonalWindsWicht2,OddCoeffsDiscontKong} and the discontinuity can contribute a significant 
    amount to the calculation of the coefficients $J_n$ even when some smoothing is 
    applied along the discontinuity (see also \cite*{OddCoeffsDiscontKong}).
\end{remarkS}
The projected wind fields are then usually multiplied by a function $Q_p$, which 
characterizes the decrease of the wind field in the interior of the planet 
(see \cite*{CylindWindsKaspi,AtmosWindsKaspi}). These functions are dependent
on parameters $p\in D\subs\R^d$, which can be used to fit the wind to the measured 
gravitational coefficients. So we are now assuming that $u_{\varphi}$ has the form
\begin{equation*}
    u_{\varphi}(\bx) = Q_p(\bx)u^{\proj}_c(\bx).
\end{equation*}
We can then calculate the dynamical gravitational coefficients in dependence of 
$p\in D$ using Theorem \ref{thm:DynJnCalc}, denoting these coefficients as $\delta J_n(p)$ 
for every $n\geq 2$ and $p\in D$. 
Since the gravitational coefficients $J_n$ of the gas giants Jupiter and Saturn can be calculated from the measurements done by the Juno and the Cassini missions respectively and are known up to $J_{10}$, see \cite*{JupiterGravFieldIess,SaturnGravFieldIess,AtmosWindsKaspi} (more recently these coefficients have been determined up to $J_{40}$, see \cite*{CylindWindsKaspi}), we can compare the calculated coefficients to the measured ones and see which parameter $p$ gives the best fit to the data. 
Now we cut off the sequences of calculated coefficients $(J^0_n)_{n\geq 2}$ and $(\delta J_n(p))_{n\geq 2}$ at some index $N\in\N$, denoting the resulting vectors in $\R^{N-2}$ as $J^0$ and $\delta J(p)$ respectively. 
We can then compare these to the vector of measured coefficients $J:=(J_n)_{n=2,\ldots,N}\in\R^{N-2}$ using some functional, for example
\begin{equation*}
    \mathcal{L}(p) := \|J-(J^0+\delta J(p))\|_2^2.
\end{equation*}
Minimizing this functional will then yield the parameter $p_0$, defining the wind field, which best fits the data.
The calculation scheme for this inverse problem is summarized in Figure \ref{fig:FlowchartIVP}.
\begin{figure}[ht]
    \centering
    \includegraphics[width=0.7\textwidth]{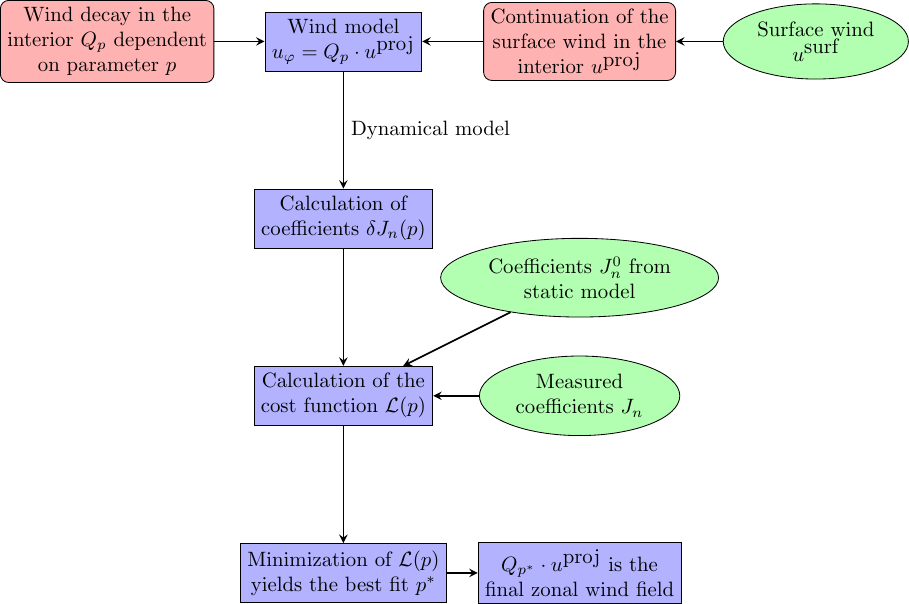}
    \caption{Flowchart of the inverse problem}\label{fig:FlowchartIVP}
\end{figure}
\newpage

%Conclusion
%--------------------------------------------------------------------------------------
\section{Conclusion and Further Thoughts}
This article provides a mathematically rigorous framework for the development of solutions of a relatively simple model (the one developed in Section \ref*{sec:TGWEq}), which describes the relationship between the gravitational potential and the wind field in the atmosphere of a gas giant, like Jupiter or Saturn. 
In particular, we constructed an orthonormal basis that is specifically adapted to solve this problem (in Section \ref{sec:ONBGravPot}) and provided a method to calculate the gravitational coefficients generated by the wind using this basis (see Section \ref{sec:DetDynPot}).
\par
As one can see from a glance at Figure \ref{fig:FlowchartFWP}, a lot of assumptions are necessary to reach a concrete mathematical connection between the gravitational coefficients and the zonal winds.
What follows now is a short discussion of how some of these assumptions could be relaxed in further research.
\par 
We assumed a polytrope index unity background model. However, \cite{ZonalWindsWicht} show that the method discussed would also work, as long as the normalized density gradient,
\begin{equation*}
    (\partial_r \rho_0)/g_0,
\end{equation*}
is roughly independent of the radius.
\par
Another assumption which could be changed is the parametrization of the wind field in the interior of the planet.
We can of course observe the wind on the surface $u^{\surf}$ and in this paper we chose a very specific way of continuing the wind field into the interior, namely along cylinders with axes collinear to the axis of rotation of the planet.
This yielded the quantity $u^{\proj}_{\textup{c}}$ described in Definition \ref{def:ProjWindCylind}.
On closer inspection this leads to a mathematical problem because the surface wind fields on Jupiter and Saturn are not symmetric upon reflection on the equatorial plane, see Figure \ref{fig:SurfacWinds}.
Our method of constructing $u^{\proj}_{\textup{c}}$ means that just north of the equatorial plane the values of the wind field will be derived from the surface winds on the northern hemisphere, but just south the values correspond to the surface winds on the southern hemisphere.
The asymmetry in the surface wind field thus leads to a discontinuity of $u^{\proj}_c$ along the equatorial plane. If our decay function $Q_p$ does not compensate this step, the zonal winds $u_{\varphi}$ will also jump across this plane.
This is problematic because our coefficients $\delta J_n$ depend on the source term $S^u$ via Theorem \ref{thm:DynJnCalc} and the source term $S^u$ is defined by an integral over $\nabla\cdot(\rho_0 u_{\varphi})$ (see Definition \ref{def:SourceTGW}).
Though there is no principle difficulty to mathematically solving for the effect of this jump \cite{ZonalWindsWicht2}, in reality such a discontinuity would not persist and seems unphysical.
Another option for continuing the surface winds inwards is to do it radially, meaning $u^{\proj}$ is constant along rays emanating from the center of the planet,
but this approach might also be unphysical.
The problem of the equatorial discontinuity is further discussed in \cite*{ZonalWindsWicht2} and \cite*{OddCoeffsDiscontKong}.
\par
Some authors suggest that the winds may penetrate deeper at lower latitude \cite{CylindWindsKaspi}. 
On the other hand, \cite{ZonalWindsWicht2} assume that equatorially asymmetric winds at low latitudes remain shallow, which also avoids the problem of the discontinuity.
\par
Furthermore, at least for Saturn it seems advisable to relax the condition of a spherically symmetric solution, since this gas giant deviates quite far from a spherical shape \cite{SaturnFactSheet}. 
This would require a completely new approach and it is unclear whether any ansatz functions could be found that would make a non-spherical problem assessible to (semi) analytical methods. 
\par 
Regardless of these problems or possible generalizations, this article provides a mathematically rigorous way to study the atmospheric structure of gas giants using gravitational data. 
The proofs for the existence and properties of the adapted orthonormal basis in Section \ref{sec:ONBGravPot}, as well as the calculation method from Section \ref{sec:DetDynPot} are to the best knowledge of the authors, new developments.

%Appendix
%--------------------------------------------------------------------------------------
\appendix

%Further properties of Bessel functions
%--------------------------------------------------------------------------------------
\section{Further Properties of Bessel functions}\label{sec:BesselProp}
This section summarizes some theorems and methods from \cite*{AaS,BesselWatson} that are 
needed to prove those properties of the functions $B_{m,n}$ in the previous 
chapters.
\begin{definitionS}\label{def:BesselFunc}
    The \textbf{Bessel functions} of the first kind and of order $\nu\geq -1/2$ are 
    defined via 
    \begin{equation*}
        J_{\nu}(x):=\sum_{m=0}^{\infty}\frac{(-1)^m}{m!\, \Gamma(\nu+m+1)}
        \left(\frac{x}{2}\right)^{2m+\nu}
    \end{equation*}
    for every $x\in\R$. The \textbf{spherical Bessel functions} of the first kind and of 
    order $n\in\N_0\cup\{-1\}$ are then defined by
    \begin{equation*}
        j_n(x) := \sqrt{\frac{\pi}{2x}}J_{n+1/2}(x)
    \end{equation*}
    for every $x>0$.
\end{definitionS}
\begin{lemmaS}\label{lma:RecurrenceBessel}
    For every $\nu\geq\frac{1}{2}$ and $x\in\R$ we have
    \begin{equation}\label{eq:RecurrenceBessel}
        J_{\nu}'(x) = J_{\nu-1}(x)-\frac{\nu}{x}J_{\nu}(x).
    \end{equation}
\end{lemmaS}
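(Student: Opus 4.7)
The plan is to prove this classical Bessel function identity by differentiating the defining power series term by term. The cleanest route is to first establish the auxiliary identity
\begin{equation*}
    \frac{\mathrm{d}}{\mathrm{d}x}\bigl[x^{\nu}J_{\nu}(x)\bigr] = x^{\nu}J_{\nu-1}(x),
\end{equation*}
from which the claim follows immediately by the product rule: carrying out the differentiation on the left yields $\nu x^{\nu-1}J_{\nu}(x)+x^{\nu}J_{\nu}'(x)$, and after dividing by $x^{\nu}$ (for $x\neq 0$) and rearranging we obtain precisely the stated recurrence.

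To verify the auxiliary identity, I would substitute the series from Definition \ref{def:BesselFunc} to get
\begin{equation*}
    x^{\nu}J_{\nu}(x) = \sum_{m=0}^{\infty}\frac{(-1)^m}{m!\,\Gamma(\nu+m+1)}\,\frac{x^{2m+2\nu}}{2^{2m+\nu}},
\end{equation*}
differentiate term by term, and use the pair of simplifications $2m+2\nu = 2(m+\nu)$ and $\Gamma(\nu+m+1)=(\nu+m)\Gamma(\nu+m)$. The factor $(m+\nu)$ then cancels, and after pulling out $x^{\nu}$ one recognises the resulting series as $x^{\nu}J_{\nu-1}(x)$, using that $\Gamma(\nu+m) = \Gamma((\nu-1)+m+1)$. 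The hypothesis $\nu\geq 1/2$ ensures $\nu-1\geq -1/2$, so that $J_{\nu-1}$ is itself covered by Definition \ref{def:BesselFunc}.

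The main technical point, though routine, is justifying termwise differentiation: the series for $J_{\nu}(x)$ converges absolutely on all of $\mathbb{R}$ and uniformly on compact subsets (by the ratio test, since the factorial in the denominator dominates), and the same holds for the differentiated series, which legitimises the interchange of sum and derivative on any compact set in $\mathbb{R}\setminus\{0\}$. The apparent singularity at $x=0$ in the statement is only cosmetic: the equivalent form $xJ_{\nu}'(x)=xJ_{\nu-1}(x)-\nu J_{\nu}(x)$ holds on all of $\mathbb{R}$, and for $\nu\geq 1/2$ one verifies directly from the series that both sides vanish at $x=0$. No subtle estimates are required; the only real obstacle is bookkeeping of the index shift and the gamma-function identity, which I expect to dispatch in a few lines.
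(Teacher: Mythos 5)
Your argument is correct, but it takes a different route from the paper only in the sense that the paper gives no argument at all: its ``proof'' consists of citing Equation 9.1.27 on page 361 of Abramowitz--Stegun and omitting the derivation. What you supply is the standard self-contained verification from the series in Definition \ref{def:BesselFunc}: the auxiliary identity $\frac{\di}{\di x}\bigl[x^{\nu}J_{\nu}(x)\bigr]=x^{\nu}J_{\nu-1}(x)$ follows from termwise differentiation together with $\Gamma(\nu+m+1)=(\nu+m)\Gamma(\nu+m)$, and the product rule then gives \eqref{eq:RecurrenceBessel}; your checks are all sound, including the observation that $\nu\geq 1/2$ keeps $J_{\nu-1}$ within the scope of the paper's definition and that local uniform convergence of the (differentiated) series justifies the interchange. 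The only caveat worth flagging is one you have already half-noted: the identity as stated contains $\nu/x$, so it is genuinely an identity on $x\neq 0$ (and, for non-integer $\nu$, the expressions $x^{\nu}$ and $(x/2)^{2m+\nu}$ are only unambiguous for $x>0$, which is the regime actually used in the paper via the spherical Bessel functions); restricting your derivation to $x>0$, or working directly with the split $2m+\nu=2(m+\nu)-\nu$ in the differentiated series for $J_{\nu}'$, removes even this cosmetic issue. In short, your proof is correct and has the merit of making the paper self-contained where it currently defers to the literature, at the cost of a few lines of routine bookkeeping.
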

\begin{proof}[Proof]
    The proof is omitted, but this is stated as Equation 9.1.27 on page 361 in \cite*{AaS}.
\end{proof}
\begin{theoremS}\label{thm:NormalizedBessel}
    If $n\in\N_0$ is fixed but arbitrary and the sequence $(\lambda_{n,m})_{m\in\N}$ is 
    made up of all the positive solutions of 
    \begin{equation*}
        xJ_{n+1/2}'(x)+\left(n+\frac{1}{2}\right)J_{n+1/2}(x) = 0,
    \end{equation*}
    then we can calculate the following integral as
    \begin{equation*}
        \int_{0}^{1}sJ_{n+1/2}^2(\lambda_{n,m}s)\di s = 
        \frac{J_{n+1/2}^2(\lambda_{n,m})}{2}.
    \end{equation*}
\end{theoremS}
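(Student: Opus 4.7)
My plan is to establish the familiar Lommel-type evaluation
\[
\int_0^1 s J_\nu^2(\lambda s)\,\di s = \frac{1}{2}\left[\bigl(J_\nu'(\lambda)\bigr)^2 + \left(1-\frac{\nu^2}{\lambda^2}\right)J_\nu^2(\lambda)\right]
\]
for general $\lambda>0$ and $\nu = n+1/2$, and then specialise using the boundary condition $\lambda J_\nu'(\lambda) = -\nu J_\nu(\lambda)$ which the $\lambda_{n,m}$ satisfy by hypothesis. This reduces the bracket to $\tfrac{\nu^2}{\lambda^2}J_\nu^2(\lambda) + \bigl(1-\tfrac{\nu^2}{\lambda^2}\bigr)J_\nu^2(\lambda) = J_\nu^2(\lambda)$, giving exactly the claim.

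To derive the Lommel identity I would set $y(s):=J_\nu(\lambda s)$ and start from the Bessel equation in Sturm--Liouville form,
\[
\frac{\di}{\di s}\bigl(s\, y'(s)\bigr) + \left(\lambda^2 s - \frac{\nu^2}{s}\right)y(s) = 0,
\]
which follows from the definition of $J_\nu$ by the usual termwise verification. Multiplying by the integrating factor $2 s y'(s)$ turns the first term into $\frac{\di}{\di s}\bigl((sy')^2\bigr)$ and rewrites the remaining contribution as $(\lambda^2 s^2 - \nu^2)\frac{\di}{\di s}(y^2)$. Integrating from $0$ to $1$ and applying integration by parts to the second piece produces
\[
\bigl[(sy')^2\bigr]_0^1 + \bigl[(\lambda^2 s^2 - \nu^2)y^2\bigr]_0^1 - \int_0^1 2\lambda^2 s\, y^2(s)\,\di s = 0.
\]

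The next step is to handle the boundary contributions. At $s=1$ one has $y(1)=J_\nu(\lambda)$ and $sy'(s)\big|_{s=1}=\lambda J_\nu'(\lambda)$. At $s=0$, since $\nu = n+1/2 \geq 1/2>0$, the series definition in Definition \ref{def:BesselFunc} gives $J_\nu(0)=0$ and $J_\nu'(s\lambda)\cdot\lambda s$ is also $\mathcal O(s^{2\nu})$ as $s\downarrow 0$, so both boundary terms at zero vanish. Solving the resulting linear equation for the integral yields the Lommel identity displayed above; substituting $J_\nu'(\lambda)=-\tfrac{\nu}{\lambda}J_\nu(\lambda)$ from the defining equation of the $\lambda_{n,m}$ finishes the proof.

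I do not anticipate a real obstacle: the argument is a one-line computation once the integrating factor $2sy'$ is identified, and the only subtlety is the vanishing of the boundary terms at $s=0$, which follows cleanly from the lower bound $\nu \geq 1/2$. The main care required is in the algebraic bookkeeping when simplifying the bracket after the boundary condition is applied.
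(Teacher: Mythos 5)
Your proof is correct, but it takes a more self-contained route than the paper's. The paper disposes of this theorem with a one-line citation: it applies the standard Dini-normalization integral, Equation 11.4.5 in Abramowitz and Stegun, with $\nu=a=n+1/2$ and $b=1$, which immediately yields the stated value for zeros of $xJ_{\nu}'(x)+\nu J_{\nu}(x)$. You instead derive the underlying Lommel identity $\int_0^1 sJ_{\nu}^2(\lambda s)\,\di s=\tfrac12\bigl[(J_{\nu}'(\lambda))^2+(1-\nu^2/\lambda^2)J_{\nu}^2(\lambda)\bigr]$ from the Sturm--Liouville form of Bessel's equation via the integrating factor $2sy'$, and then specialise using $\lambda J_{\nu}'(\lambda)=-\nu J_{\nu}(\lambda)$, which collapses the bracket to $J_{\nu}^2(\lambda)$. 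This is precisely the identity behind the cited formula, so the two arguments prove the same fact; yours buys independence from the reference and makes transparent exactly how the boundary condition satisfied by the $\lambda_{n,m}$ produces the clean normalization constant, at the cost of a page of bookkeeping the paper avoids. One small imprecision in your boundary analysis: near $s=0$ you assert that $\lambda s\,J_{\nu}'(\lambda s)$ is $\mathcal{O}(s^{2\nu})$, whereas the series gives $\mathcal{O}(s^{\nu})$, so that $(sy')^2=\mathcal{O}(s^{2\nu})$; since $\nu=n+1/2\geq 1/2>0$ both boundary terms at $0$ vanish either way, so the conclusion stands.
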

\begin{proof}[Proof]
    We apply Equation 11.4.5 on page 485 in \cite*{AaS}. In the notation of \cite*{AaS} we have 
    $\nu=n+\frac{1}{2}$, $a=n+\frac{1}{2}$ and $b=1$. Thus we get the desired relation.
\end{proof}
\begin{theoremS}\label{thm:SeriesContBessel}
    If $f$ is continuously differentiable on $[0,1]$, $n\in\N_0$ is fixed but arbitrary 
    and the sequence $(\lambda_{n,m})_{m\in\N}$ is made up of every positive solution of 
    the equation 
    \begin{equation}\label{eq:FreqEqBessel}
        xJ_{n+1/2}'(x)+\left(n+\frac{1}{2}\right)J_{n+1/2}(x) = 0,
    \end{equation}
    then we can write $f$ as the Dini-Bessel series
    \begin{equation}\label{eq:BesselSeries}
        f(s) = \sum_{m=1}^{\infty}c_m\frac{2}{J_{n+1/2}^2(\lambda_{n,m})}
        J_{n+1/2}(\lambda_{n,m}s).
    \end{equation}
    The convergence of the series is uniform in $[\varepsilon,1-\varepsilon]$ for every 
    $0<\varepsilon<1$ and coefficients $c_m$ are given by 
    \begin{equation}\label{eq:DefCoeffs}
        c_m := \int_{0}^{1}sf(s)J_{n+1/2}(\lambda_{n,m}s)\di s.
    \end{equation}
\end{theoremS}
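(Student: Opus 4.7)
The plan is to derive this theorem by invoking the classical convergence theory for Dini--Bessel series from Chapter XVIII of Watson's treatise. The series \eqref{eq:BesselSeries} is a Dini series of order $\nu = n + \tfrac{1}{2}$ attached to the boundary condition \eqref{eq:FreqEqBessel}, which in Watson's notation corresponds to the constant $H = n + \tfrac{1}{2}$. By Theorem \ref{thm:NormalizedBessel}, the family $\{J_{n+1/2}(\lambda_{n,m}\,\cdot\,)\}_{m\in\N}$ is orthogonal on $(0,1)$ with weight $s$ and has squared norm $J_{n+1/2}^2(\lambda_{n,m})/2$, so the coefficients $c_m$ in \eqref{eq:DefCoeffs} are exactly the Fourier coefficients of $f$ with respect to this orthogonal system, and the partial sums of \eqref{eq:BesselSeries} are their $L^2_s$-orthogonal projections.

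The key step would be to apply Watson's Dini-series convergence result (section 18.34), which guarantees that, for a function $g$ of bounded variation on every compact subinterval of $(0,1)$ and integrable against $\sqrt{s}$ near $s=0$, the Dini--Bessel expansion converges to $g(s)$ at every point of continuity of $g$, and uniformly on closed subintervals where $g$ is continuous. Our hypothesis $f \in \Cont^1[0,1]$ places $f$ in this class in the strongest form: $f$ is continuous and of bounded variation on the full closed interval $[0,1]$, and the weighted integrability near the origin is automatic. Watson's theorem then yields pointwise convergence of \eqref{eq:BesselSeries} on all of $(0,1)$, together with the uniform convergence on $[\varepsilon, 1-\varepsilon]$ claimed.

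The main obstacle, were one to write a fully self-contained proof, would be Watson's convergence theorem itself. The standard argument rewrites $S_M(s)$ as an integral against a Dirichlet-type kernel
\[
D_M(s,\sigma) \;=\; \sum_{m=1}^{M} \frac{2\, J_{n+1/2}(\lambda_{n,m}s)\, J_{n+1/2}(\lambda_{n,m}\sigma)}{J_{n+1/2}^2(\lambda_{n,m})},
\]
and then analyses this kernel via the asymptotics $J_\nu(x) \sim \sqrt{2/(\pi x)}\,\cos(x - \nu\pi/2 - \pi/4)$ for large $x$ and the spacing $\lambda_{n,m+1} - \lambda_{n,m} \to \pi$ of the zeros. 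The continuous differentiability of $f$ is then used in a Riemann--Lebesgue-style estimate to suppress the oscillatory remainder. The endpoints $s=0$, where $J_{n+1/2}(\lambda_{n,m}s) \sim s^{n+1/2}$ degenerates, and $s=1$, where the Dini-type boundary condition operates, are precisely the points where uniform control breaks down, which is why uniformity can only be asserted on intervals $[\varepsilon, 1-\varepsilon]$ strictly inside $(0,1)$. Since this technical analysis is already recorded in Watson, the cleanest route is to cite it rather than reproduce it.
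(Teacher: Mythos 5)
Your proposal follows essentially the same route as the paper: both reduce the statement to Watson's convergence theory for Dini--Bessel series of order $\nu=n+\tfrac{1}{2}$ attached to the boundary constant $H=n+\tfrac{1}{2}$, and both identify the coefficients through the normalization integral of Theorem \ref{thm:NormalizedBessel}; the paper cites Watson's Theorem 18.33 rather than reproducing the kernel analysis, exactly as you suggest. The one point you gloss over, and which the paper checks explicitly, is the initial term of the Dini expansion: Watson's theorem in general yields $f(s)=\mathcal{B}_0(s)+\sum_{m}b_m J_{n+1/2}(\lambda_{n,m}s)$, where $\mathcal{B}_0$ need not vanish --- it is a multiple of $s^{\nu}$ when $H+\nu=0$ and involves a modified Bessel function when $H+\nu<0$. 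To obtain \eqref{eq:BesselSeries} exactly as stated, you must therefore note that here $H+\nu=2n+1>0$, which forces $\mathcal{B}_0\equiv 0$. This is an easy verification, but without it the identity you claim does not follow from the theorem you cite; with it added, your argument coincides with the paper's.
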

\begin{proof}[Proof]
    Since $f\in\Cont^1[0,1]$, it is of bounded variation and continuous on $[0,1]$.
    This together with the fact that the $\lambda_{n,m}$ satisfy Equation 
    \eqref{eq:FreqEqBessel} means we can apply Theorem 18.33 on pages 600-602 from 
    \cite*{BesselWatson} (with $\nu=n+1/2$) and conclude that 
    \begin{align*}
        f(s) = \mathcal{B}_0(s)+\sum_{m=1}^{\infty}b_mJ_{n+1/2}(\lambda_{n,m}s),
    \end{align*}
    where the convergence is uniform on $[\varepsilon,1-\varepsilon]$ for every 
    $0<\varepsilon<1$. The notation is taken from \cite*{BesselWatson}. In our case the 
    function $\mathcal{B}_0$ is identically $0$, since the sum of the coefficients in 
    front of $J_{n+1/2}$ and the order of the involved Bessel functions in 
    Equation \eqref{eq:FreqEqBessel} is $2n+1>0$ 
    (see page 597 in 18.3 of \cite*{BesselWatson} for a definition of $\mathcal{B}_0$). 
    The coefficients $b_m$ are given by the equation 
    \begin{align*}
        b_m\int_{0}^{1}sJ_{n+1/2}^2(\lambda_{n,m}s)\di s = \int_{0}^{1}sf(s)J_{n+1/2}
        (\lambda_{n,m}s)\di s,
    \end{align*}
    which is taken from page 597 in \cite*{BesselWatson}.
    The integral on the left-hand side of the above equation can be calculated with 
    Theorem \ref*{thm:NormalizedBessel}, yielding
    \begin{equation*}
        b_m = \frac{2}{J_{n+1/2}^2(\lambda_{n,m})}c_m.\qedhere
    \end{equation*}
\end{proof}
\begin{lemmaS}\label{lma:AsymptoticCoeffBessel}
    If $n\in\N_0$ and $f$ is a function such that $s\mapsto\sqrt{s}f(s)$ is continuously 
    differentiable on $[0,1]$, we have 
    \begin{equation*}
        \int_{0}^{1}sf(s)J_{n+1/2}(\lambda s)\di s = \mathcal{O}
        \left(\lambda^{-3/2}\right)\quad\text{as }\lambda\to\infty.
    \end{equation*}
\end{lemmaS}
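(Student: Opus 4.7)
The plan is to split the integral at $s = 1/\lambda$ and estimate the two pieces using (i) the series definition of $J_{n+1/2}$ near the origin, and (ii) the Poisson--Hankel asymptotic expansion on the rest. Throughout, set $g(s) := \sqrt{s}f(s)$; by hypothesis $g \in \Cont^1[0,1]$, so $M := \max\{\|g\|_\infty, \|g'\|_\infty\}$ is finite, $|sf(s)| = \sqrt{s}|g(s)| \leq M\sqrt{s}$, and $|g(s) - g(0)| \leq Ms$ by the mean value theorem. Assume WLOG $\lambda \geq 2$.

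On $[0, 1/\lambda]$ the series in Definition \ref{def:BesselFunc} yields $|J_{n+1/2}(x)| \leq Cx^{n+1/2}$ for $x \in [0,1]$, so this piece contributes at most $CM\lambda^{n+1/2}\int_0^{1/\lambda} s^{n+1}\di s = \mathcal{O}(\lambda^{-3/2})$. For $s \in [1/\lambda, 1]$ I would use the two-term expansion
\begin{equation*}
    J_{n+1/2}(x) = \sqrt{\frac{2}{\pi x}}\left[\cos(x - \phi_n) - \frac{n(n+1)}{2x}\sin(x - \phi_n)\right] + \tilde R(x), \quad \phi_n := \frac{(n+1)\pi}{2},
\end{equation*}
with $|\tilde R(x)| \leq Cx^{-5/2}$ for $x \geq 1$, which is the standard Poisson--Hankel expansion (\S 9.2 in \cite{AaS}). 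The remainder $\tilde R(\lambda s)$ contributes at most $CM\lambda^{-5/2}\int_{1/\lambda}^1 s^{-2}\di s = \mathcal{O}(\lambda^{-3/2})$. The leading cosine piece becomes $\sqrt{2/(\pi\lambda)}\int_{1/\lambda}^1 g(s)\cos(\lambda s - \phi_n)\di s$, and a single integration by parts (justified by $g \in \Cont^1$) turns the inner integral into $\mathcal{O}(1/\lambda)$, giving $\mathcal{O}(\lambda^{-3/2})$ overall.

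The principal obstacle is the subleading term, proportional to $\lambda^{-3/2}\int_{1/\lambda}^1 g(s)\sin(\lambda s - \phi_n)/s\,\di s$: the naive bound $|g(s)/s| \leq M/s$ would introduce an unwanted factor of $\log\lambda$ through $\int_{1/\lambda}^1 s^{-1}\di s$. To eliminate it, I would decompose $g(s) = g(0) + [g(s) - g(0)]$. After the substitution $u = \lambda s$, the first part equals $g(0)\int_1^\lambda \sin(u - \phi_n)/u\,\di u$, which stays bounded as $\lambda \to \infty$ because the improper sine integral $\int_0^\infty \sin u/u\,\di u$ converges; the second part has integrand bounded by $M$ (using $|g(s) - g(0)|/s \leq M$) and therefore contributes $\mathcal{O}(1)$. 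Summing all four contributions yields the claimed bound $\mathcal{O}(\lambda^{-3/2})$.
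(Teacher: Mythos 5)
Your proof is correct, but it takes a genuinely different route from the paper. The paper dispatches the lemma in two lines: since $s\mapsto\sqrt{s}f(s)$ is in $\Cont^1[0,1]$ it is of bounded variation, and the asserted bound is then precisely the lemma of \S 18.27, p.\ 595, in \cite{BesselWatson}, which is cited as a black box. You instead reprove that lemma from scratch: the split at $s=1/\lambda$, the power-series bound $|J_{n+1/2}(x)|\leq Cx^{n+1/2}$ near the origin, the two-term Hankel expansion with an $\mathcal{O}(x^{-5/2})$ remainder on $[1/\lambda,1]$, one integration by parts on the leading oscillatory term, and the decomposition $g(s)=g(0)+[g(s)-g(0)]$ to suppress the potential $\log\lambda$ coming from the $1/s$ in the subleading term --- all of these steps are sound, and the phase $\phi_n=(n+1)\pi/2$ and coefficient $n(n+1)/2$ are the correct specializations of Hankel's expansion to $\nu=n+1/2$. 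One cosmetic point: to bound $\int_1^{\lambda}\sin(u-\phi_n)/u\,\di u$ uniformly in $\lambda$ you should either expand $\sin(u-\phi_n)$ into $\sin u$ and $\cos u$ and use the convergence of both improper integrals, or invoke Dirichlet's test directly; appealing only to $\int_0^{\infty}\sin u/u\,\di u$ does not literally cover the shifted phase. As for what each approach buys: the paper's citation is shorter and establishes the result under the weaker hypothesis that $\sqrt{s}f(s)$ is merely of bounded variation (which is all the authors actually use), whereas your argument consumes the full $\Cont^1$ hypothesis but is self-contained, makes the origin of the exponent $-3/2$ transparent ($\lambda^{-1/2}$ from the Bessel amplitude times $\lambda^{-1}$ from one integration by parts), and avoids reliance on Watson's technical \S 18.27; for half-integer order one could even simplify your middle step further, since $J_{n+1/2}$ is elementary and its Hankel expansion terminates, so the remainder bound you quote from \S 9.2 of \cite{AaS} is immediate.
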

\begin{proof}[Proof]
    If $s\mapsto \sqrt{s}f(s)$ is continuously differentiable on $[0,1]$, it is of 
    bounded variation on $[0,1]$. This means we can apply the lemma proven in 18.27 on 
    page 595 in \cite*{BesselWatson}.
\end{proof}
\begin{lemmaS}\label{lma:AsymptoticBesselFunc}
    For every $n\in\N_0$ the asymptotic expansion 
    \begin{equation*}
        J_{n+1/2}(x) = \left(\frac{2}{\pi x}\right)^{1/2}\left(\cos\left(x-
        \pi\frac{n+1}{2}\right)+\mathcal{O}\left(\frac{1}{x}\right)\right)
    \end{equation*}
    holds as $x\to\infty$.
\end{lemmaS}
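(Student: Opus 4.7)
The plan is to obtain the stated expansion as a direct specialization of the classical Hankel-type asymptotic expansion for Bessel functions of the first kind. Specifically, I would invoke equation 9.2.1 from Abramowitz and Stegun \cite{AaS}, which asserts that for any fixed $\nu\geq 0$,
\begin{equation*}
    J_{\nu}(x) = \sqrt{\frac{2}{\pi x}}\left(\cos\left(x-\frac{\nu\pi}{2}-\frac{\pi}{4}\right)+\mathcal{O}\left(\frac{1}{x}\right)\right)\quad\text{as }x\to\infty.
\end{equation*}
This is already the right shape, so there is nothing to prove analytically beyond this citation.

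The only remaining step is an elementary arithmetic simplification of the phase. Substituting $\nu = n + 1/2$ into the argument of the cosine gives
\begin{equation*}
    x-\frac{(n+1/2)\pi}{2}-\frac{\pi}{4} = x - \frac{n\pi}{2} - \frac{\pi}{4} - \frac{\pi}{4} = x-\frac{(n+1)\pi}{2},
\end{equation*}
which matches the claim exactly. Since the error bound in the general formula is uniform in the sense needed (the implicit constant depends only on $\nu$, hence on $n$, which is fixed), the $\mathcal{O}(1/x)$ term carries over verbatim.

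There is no real obstacle here; the entire content of the lemma is encoded in the cited formula from \cite{AaS}, and the only work is the phase arithmetic above. A self-contained derivation would require either the contour-integral/saddle-point analysis of the integral representation of $J_\nu$, or working with the closed-form expressions of half-integer-order Bessel functions in terms of $\sin x$ and $\cos x$ (recalling that $J_{n+1/2}$ are the spherical Bessel functions up to the factor $\sqrt{\pi/(2x)}$, and hence expressible via a finite combination of trigonometric functions divided by powers of $x$); either route would recover the stated expansion but is unnecessary given the availability of the cited standard reference.
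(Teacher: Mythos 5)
Your proposal is correct and follows essentially the same route as the paper, which likewise just cites the classical Hankel asymptotic expansion (Theorem 7.21 in Watson rather than your equation 9.2.1 in \cite{AaS}) specialized to $\nu=n+1/2$; your phase simplification $x-\tfrac{(n+1/2)\pi}{2}-\tfrac{\pi}{4}=x-\tfrac{(n+1)\pi}{2}$ is exactly right.
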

\begin{proof}[Proof]
    This is Theorem 7.21 from page 199 in \cite*{BesselWatson} applied for $\nu=n+1/2$.
\end{proof}
\begin{lemmaS}\label{lma:AsymptoticBesselZero}
    Let $n\in\N_0$ be fixed but arbitrary and let $(\lambda_{n,m})_{m\in\N}$ be the positive 
    zeros of $J_{n-1/2}$, i.e. the solutions of Equation \eqref{eq:FreqEqBessel}. Then the asymptotic 
    expansion
    \begin{equation*}
        \lambda_{n,m} = \left(m+\frac{n-1}{2}\right)\pi+
        \mathcal{O}\left(\frac{1}{m}\right)
    \end{equation*}
    holds as $m\to\infty$.
\end{lemmaS}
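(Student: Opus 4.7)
The plan is to combine the asymptotic expansion of $J_{n-1/2}$ with an intermediate value argument, localizing the positive zeros near the natural candidate points $\mu_m := (m + (n-1)/2)\pi$. For $n \geq 1$, Lemma \ref{lma:AsymptoticBesselFunc} applied with index $n-1$ in place of $n$ gives
\begin{equation*}
    J_{n-1/2}(x) = \sqrt{\frac{2}{\pi x}}\left(\cos\left(x - \frac{\pi n}{2}\right) + \mathcal{O}\left(\frac{1}{x}\right)\right)
\end{equation*}
as $x \to \infty$. For $n = 0$, the closed form $J_{-1/2}(x) = \sqrt{2/(\pi x)}\cos(x)$ yields $\lambda_{0,m} = (m - 1/2)\pi$ exactly, so I may focus on $n \geq 1$ afterwards.

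A direct computation shows $\cos(\mu_m - \pi n/2) = \cos(\pi m - \pi/2) = 0$, and the addition formula gives $\cos(\mu_m + h - \pi n/2) = (-1)^m \sin(h)$. Substituting into the asymptotic expansion, one obtains for $x = \mu_m + h$ with $|h| \leq \pi/2$
\begin{equation*}
    J_{n-1/2}(\mu_m + h) = \sqrt{\frac{2}{\pi(\mu_m + h)}}\left((-1)^m \sin(h) + \mathcal{O}\left(\frac{1}{m}\right)\right),
\end{equation*}
with the implied constant uniform in $h$. Choosing a constant $C$ strictly larger than this implied constant and evaluating at $h = \pm C/m$, the leading term $\pm(-1)^m \sin(C/m) \sim \pm(-1)^m C/m$ dominates the remainder for all large $m$, so $J_{n-1/2}$ has opposite signs at $\mu_m - C/m$ and $\mu_m + C/m$. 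The intermediate value theorem then produces a zero in $(\mu_m - C/m,\, \mu_m + C/m)$.

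The main obstacle is to identify this zero as precisely $\lambda_{n,m}$, the $m$-th positive root in the standard enumeration, rather than some extraneous root. I would handle this by observing that the intervals $(\mu_m - C/m,\, \mu_m + C/m)$ are pairwise disjoint for $m$ large enough (their lengths shrink while the centers are $\pi$ apart), and on the complement of their union the cosine term remains bounded away from zero (at least $|\sin(\pi/2 - C/m)| \to 1$), so the $\mathcal{O}(1/m)$ remainder cannot produce additional sign changes there. Consequently, for all sufficiently large $m$, $J_{n-1/2}$ has exactly one zero in each interval $(\mu_m - C/m,\, \mu_m + C/m)$ and none between consecutive intervals; this matches the monotone enumeration of positive zeros of $J_{n-1/2}$. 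The finitely many remaining indices $m$ below the threshold can be absorbed into the $\mathcal{O}(1/m)$ bound by adjusting constants, completing the proof.
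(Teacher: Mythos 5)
Your route is genuinely different from the paper's: the paper disposes of this lemma by citing McMahon's expansion (Watson, Theorem 15.53; A\&S 9.5.12), whereas you attempt a self-contained proof from Lemma \ref{lma:AsymptoticBesselFunc} plus the intermediate value theorem. The localization step is fine: with $\mu_m=(m+\tfrac{n-1}{2})\pi$, the computation $\cos(\mu_m+h-\tfrac{\pi n}{2})=(-1)^m\sin h$ and the uniform $\mathcal{O}(1/m)$ remainder do give a sign change on $(\mu_m-C/m,\mu_m+C/m)$ once $C$ exceeds the implied constant, and the $n=0$ case is exact. (One slip: on the complement of the intervals the cosine term is only bounded below by $\sin(C/m)\approx C/m$ near the interval endpoints, not by $|\sin(\pi/2-C/m)|\to 1$; the exclusion of extra zeros still works, but only because you chose $C$ strictly larger than the implied constant, and that is the bound you should quote.)

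The genuine gap is in the identification of the index. Your argument shows that, beyond some threshold $X_0=\mu_{M_0}-C/M_0$, the zeros of $J_{n-1/2}$ are in bijection with the intervals $I_m$, $m\geq M_0$, one per interval. But the $m$-th positive zero $\lambda_{n,m}$ in the monotone enumeration is then only known to satisfy $\lambda_{n,m}\in I_{m+c}$ for some fixed integer $c$ determined by how many zeros lie in $(0,X_0)$ compared with the number of skipped intervals $I_1,\dots,I_{M_0-1}$; nothing in your proof shows $c=0$. A nonzero $c$ would shift the asymptotics by the constant $c\pi$, which cannot be ``absorbed into the $\mathcal{O}(1/m)$ bound by adjusting constants'' --- that trick only handles finitely many indices when the indexing is already correct for large $m$. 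To close this you need a zero-counting argument on the initial segment, e.g.\ a Sturm comparison/phase-function count of the zeros of $\sqrt{x}\,J_{n-1/2}(x)$ from the origin, or an induction on $n$ using the interlacing of the zeros of $J_{\nu}$ and $J_{\nu+1}$ anchored at the explicit cases $\nu=\pm 1/2$ --- or simply the citation of Watson's Theorem 15.53, as the paper does, which delivers the correctly indexed expansion directly.
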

\begin{proof}[Proof]
    We use Theorem 15.53 on pages 505-507 in \cite*{BesselWatson} with $\nu=n-1/2$. This fact is 
    also stated in Equation 9.5.12 on page 371 in \cite*{AaS}.
\end{proof}
With the three preceeding lemmata we can now prove a result about the decay of the 
Fourier coefficients from Chapter \ref*{sec:ONBGravPot}.
\begin{theoremS}\label{thm:DecayFourierCoeffBessel}
    For a fixed but arbitrary $n\in\N_0$ and a function $f\in\Cont^1[0,1]$, let  
    $(\lambda_{n,m})_{m\in\N}$ be the positive solutions to Equation 
    \eqref{eq:FreqEqBessel}. Then there exists a positive integer $M\in\N$ and a 
    constant $C>0$ (both of which may depend on $f$ and $n$), such that for every 
    $m\geq M$ we have
    \begin{equation*}
        \left|\sqrt{\frac{2}{j_n^2(\lambda_{n,m})}}\int_{0}^{1}s^2f(s)j_n(\lambda_{n,m}s)
        \di s\right|\leq \frac{C}{m-3/4}.
    \end{equation*}
\end{theoremS}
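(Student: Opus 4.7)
The plan is to transform the spherical-Bessel integral into an ordinary-Bessel integral, apply Lemma \ref{lma:AsymptoticCoeffBessel} to extract $\mathcal{O}(\lambda_{n,m}^{-2})$ decay, and then combine this with a matching lower bound on $|j_n(\lambda_{n,m})|$ obtained from the asymptotics in Lemmas \ref{lma:AsymptoticBesselFunc} and \ref{lma:AsymptoticBesselZero}.

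First, using Definition \ref{def:BesselFunc}, I would rewrite
\begin{equation*}
    \int_0^1 s^2 f(s) j_n(\lambda_{n,m}s)\di s = \sqrt{\frac{\pi}{2\lambda_{n,m}}}\int_0^1 s\bigl(\sqrt{s}\,f(s)\bigr)J_{n+1/2}(\lambda_{n,m}s)\di s.
\end{equation*}
Setting $F(s):=\sqrt{s}f(s)$, I would apply Lemma \ref{lma:AsymptoticCoeffBessel}; its hypothesis requires $s\mapsto \sqrt{s}F(s) = sf(s)$ to lie in $\Cont^1[0,1]$, which is immediate from $f\in\Cont^1[0,1]$. This yields $\int_0^1 s F(s)J_{n+1/2}(\lambda s)\di s = \mathcal{O}(\lambda^{-3/2})$ as $\lambda\to\infty$, and since Lemma \ref{lma:AsymptoticBesselZero} ensures $\lambda_{n,m}\to\infty$, multiplying by the prefactor $\sqrt{\pi/(2\lambda_{n,m})}$ gives $\bigl|\int_0^1 s^2 f(s) j_n(\lambda_{n,m}s)\di s\bigr| \leq C_1 \lambda_{n,m}^{-2}$ for all $m\ge M_1$.

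Next, I would establish $|j_n(\lambda_{n,m})|\ge c\,\lambda_{n,m}^{-1}$ for large $m$. Combining Definition \ref{def:BesselFunc} with Lemma \ref{lma:AsymptoticBesselFunc} yields
\begin{equation*}
    j_n(x) = \frac{1}{x}\Bigl(\cos\bigl(x-\tfrac{\pi(n+1)}{2}\bigr) + \mathcal{O}(x^{-1})\Bigr)\quad\text{as }x\to\infty.
\end{equation*}
By Lemma \ref{lma:AsymptoticBesselZero} the phase satisfies $\lambda_{n,m}-\pi(n+1)/2 = (m-1)\pi + \mathcal{O}(m^{-1})$, so $\cos(\lambda_{n,m}-\pi(n+1)/2) = (-1)^{m-1}(1+\mathcal{O}(m^{-2}))$ stays bounded in absolute value below $1/2$ (say) once $m\ge M_2$. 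Hence $|j_n(\lambda_{n,m})|\ge (4\lambda_{n,m})^{-1}$ for $m\ge M_2$, which gives $\sqrt{2/j_n^2(\lambda_{n,m})}\le 4\sqrt{2}\,\lambda_{n,m}$.

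Multiplying the two estimates shows the left-hand side of the claim is $\mathcal{O}(\lambda_{n,m}^{-1}) = \mathcal{O}(m^{-1})$, and the particular form $C/(m-3/4)$ follows from $1/m \le 1/(m-3/4)$ for $m\ge 1$ by inflating the constant. The step I expect to be the main obstacle is the lower bound on $|j_n(\lambda_{n,m})|$: while the asymptotic of $J_{n+1/2}$ and the location of the $\lambda_{n,m}$ (zeros of $J_{n-1/2}$ by the recurrence in Lemma \ref{lma:RecurrenceBessel}) are each standard, one must track the phase carefully so that the $\mathcal{O}(1/m)$ error inside the cosine does not produce accidental cancellation. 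Everything else is a routine chaining of the lemmas in Appendix \ref{sec:BesselProp}.
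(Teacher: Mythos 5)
Your proposal is correct and follows essentially the same route as the paper's proof: convert the spherical Bessel integral to an ordinary Bessel integral, apply Lemma \ref{lma:AsymptoticCoeffBessel} to $\sqrt{s}f(s)$ for the $\mathcal{O}(\lambda_{n,m}^{-3/2})$ decay, and use Lemmas \ref{lma:AsymptoticBesselFunc} and \ref{lma:AsymptoticBesselZero} to bound $|J_{n+1/2}(\lambda_{n,m})|$ (equivalently $|j_n(\lambda_{n,m})|$) from below before translating $\lambda_{n,m}^{-1}$ into $C/(m-3/4)$. The only differences are cosmetic (you carry the prefactor $\sqrt{\pi/(2\lambda_{n,m})}$ separately and phrase the lower bound in terms of $j_n\sim x^{-1}$ with constant $1/2$ instead of $1/\sqrt{2}$), and note the minor wording slip: you mean the cosine is bounded \emph{below} by $1/2$ in absolute value.
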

\begin{proof}[Proof]
    We define the integral
    \begin{equation*}
        d_m := \int_{0}^{1}s^2f(s)j_n(\lambda_{n,m}s)\di s,
    \end{equation*}
    from which we can derive 
    \begin{equation*}
        \left|\sqrt{\frac{2}{j_n^2(\lambda_{n,m})}}d_m\right| = 
        \frac{\sqrt{2}}{|J_{n+1/2}(\lambda_{n,m})|}\left|\int_{0}^{1}s
        \left(s^{1/2}f(s)\right)J_{n+1/2}(\lambda_{n,m}s)\di s\right|.
    \end{equation*}
    By Lemma \ref*{lma:AsymptoticBesselZero} we know that $\lambda_{n,m}\to\infty$ for 
    $m\to\infty$. We want to apply Lemma \ref*{lma:AsymptoticCoeffBessel} to the function 
    $g(s):=\sqrt{s}f(s)$. For this $s\mapsto\sqrt{s}g(s)=sf(s)$ has to be continuously 
    differentiable on $[0,1]$, which is obviously true. Now we apply all of the 
    asymptotic formulas which we derived previously, at first
    \begin{equation*}
        \left|\int_{0}^{1}s\left(s^{1/2}f(s)\right)J_{n+1/2}(\lambda_{n,m}s)\di s\right|
        \leq C_1\lambda_{n,m}^{-3/2}
    \end{equation*}
    holds for some constant $C_1$ and large values of $m\in\N$. Secondly for large values 
    of $m$, the coefficient $\lambda_{n,m}$ lies between
    $m\pi +((n-1)/2)\pi-\pi/4$ and $m\pi +((n-1)/2)\pi+\pi/4$ by Lemma 
    \ref*{lma:AsymptoticBesselZero}, which means 
    $|\cos(\lambda_{n,m}-\pi(n+1)/2)|\geq 1/\sqrt{2}$ holds. Thirdly, we can use Lemma 
    \ref*{lma:AsymptoticBesselFunc}, the reverse triangle inequality and the previous
    consideration about the $\lambda_{n,m}$ to derive the lower bound
    \begin{equation*}
        |J_{n+1/2}(\lambda_{n,m})|\geq \left|\frac{2}{\pi\lambda_{n,m}}\right|^{1/2}
        \left|\left|\cos\left(\lambda_{n,m}-\pi\frac{n+1}{2}\right)\right|-
        \left|\mathcal{O}\left(\frac{1}{\lambda_{n,m}}\right)\right|\right|.
    \end{equation*}
    For large $m$ the term $\mathcal{O}(\lambda_{n,m}^{-1})$ will be smaller than 
    $1/\sqrt{8}$. This implies that
    \begin{equation*}
        |J_{n+1/2}(\lambda_{n,m})|\geq\frac{1}{2}
        \left(\frac{1}{\pi\lambda_{n,m}}\right)^{1/2}.
    \end{equation*}
    Putting this all together we conclude that for large values of $m\in\N$
    \begin{equation*}
        \left|\sqrt{\frac{2}{j_n^2(\lambda_{n,m})}}d_m\right|\leq
        \frac{C_2}{\lambda_{n,m}}
    \end{equation*}
    holds for some constant $C_2$. We also know that 
    $\lambda_{n,m}\geq m\pi+\pi(n-1)/2-\pi/4\geq \pi(m-3/4)$, which proves that there is 
    some constant $C>0$ such that 
    \begin{equation*}
        \left|\sqrt{\frac{2}{j_n^2(\lambda_{n,m})}}d_m\right| 
        \leq \frac{C}{m-3/4}
    \end{equation*}
    holds for large values of $m$.
\end{proof}

%Bibliography
%--------------------------------------------------------------------------------------
\emergencystretch=1em
\printbibliography[heading=bibintoc,title=References]
\newpage
\end{document}